\newcommand{\CB}{\color{black}}
\newtheorem{definition}{Definition}
\newtheorem{proposition}{Proposition}
\newtheorem{lemma}{Lemma}
\newtheorem{remark}{Remark}
\begin{document}

\title{A general partitioning strategy \\ for non-centralized control}

\author{Alessandro~Riccardi, Luca~Laurenti~\IEEEmembership{Member,~IEEE}, and Bart~De~Schutter~\IEEEmembership{Fellow,~IEEE}
\thanks{This project has received funding from the European Research Council (ERC) under the European Union's Horizon 2020 research and innovation program (Grant agreement No. 101018826) -- Project CLariNet. \textit{(Corresponding author: Alessandro Riccardi)}}
\thanks{The authors are with the Delft Center for Systems and Control (DCSC),  Delft University of Technology, 2628 CD Delft, The Netherlands (email: a.riccardi@tudelft.nl; l.laurenti@tudelft.nl; b.deschutter@tudelft.nl).}}


\IEEEpubid{This work has been submitted to the IEEE for possible publication. Copyright may be transferred without notice, after which this version may no longer be accessible.}
\IEEEpubidadjcol


\maketitle

{\begin{abstract} 
Partitioning is a fundamental challenge for non-centralized control of large-scale systems, such as hierarchical, decentralized, distributed, and coalitional strategies. The problem consists of finding a decomposition of a network of dynamical systems into system units for which local controllers can be designed. Unfortunately, despite its critical role, a generalized approach to partitioning applicable to every system is still missing from the literature. This paper introduces a novel partitioning framework that integrates an algorithmic selection of fundamental system units (FSUs), considered indivisible entities, with an aggregative procedure, either algorithmic or optimization-based, to select composite system units (CSUs) made of several FSUs. A key contribution is the introduction of a global network metric, the partition index, which quantitatively balances intra- and inter-CSU interactions, with a granularity parameter accounting for the size of CSUs, allowing for their selection at different levels of aggregation. The proposed method is validated through case studies in distributed model predictive control (DMPC) for linear and hybrid systems, showing significant reductions in computation time and cost while maintaining or improving control performance w.r.t.\ conventional strategies. 

\end{abstract}}

\begin{IEEEkeywords}
Partitioning, Non-Centralized Control, Large-Scale Systems, Multi-Agent Systems, Predictive Control.
\end{IEEEkeywords}

\section{Introduction}  \label{sec:intro}
{\IEEEPARstart{T}{he} technological and scientific progress of the last century allowed the realization of large-scale infrastructures in several domains. These networks of systems are commonly found in the fields of transportation \cite{an_SurveyIntelligentTransportation_2011, chen_ReviewApplicationsAgent_2010}, power generation and distribution \cite{riccardi_BenchmarkApplicationDistributed_2025,riccardi_CodeUnderlyingPublication_2025a, carrasco_PowerelectronicSystemsGrid_2006, molzahn_SurveyDistributedOptimization_2017,pandey_LiteratureSurveyLoad_2013}, and telecommunications \cite{zhang_NetworkedControlSystems_2020}, among others \cite{ocampo-martinez_HierarchicalDecentralisedModel_2012b, cao_OverviewRecentProgress_2013, cortes_CoordinatedControlMultirobot_2017}. These are considered large-scale systems (LSSs) due to their geographical distribution, heterogeneity of components, and complexity of design and operation \cite{kordestani_RecentSurveyLargescale_2021}. With such characteristics, large-scale networks represent a stimulating challenge for control. 
	
Significant improvements in performance, safety, and user experience can be achieved by applying control methodologies. Of particular interest are non-centralized control techniques \cite{siljak_DecentralizedControlComplex_1991, scattolini_ArchitecturesDistributedHierarchical_2009}, solutions developed for the application of real-time control to large-scale networks. The more traditional centralized approach to control is often unsuitable in this case. Limitations arise from the computational complexity introduced by the size of the problem, delays in data communication, operational constraints, and privacy limitations. With non-centralized control, such issues can be mitigated or translated into a more tractable class of problems \cite{maestre_DistributedModelPredictive_2014}. 
	
A fundamental task at the core of every non-centralized control strategy is the definition of the sub-networks of the system that will be considered as individual operational units, i.e.\ suitable for the application of a local controller part of a non-centralized architecture. The act of defining such sub-networks is defined \textit{partitioning} \cite{riccardi_GeneralizedPartitioningStrategy_2024}. From this abstract formulation, it immediately follows that partitioning a network is a complex task that often has to be tailored for the specific application considered. As a matter of fact, a generalized approach to partitioning is currently missing in the literature. On one side, this issue is related to the fact that often when a novel control strategy is being developed, a network partition is assumed to be given a priori; on the other, partitioning might depend on requirements non exclusively related to control objectives. The present article addresses some of these problems, mainly in the former class, laying the foundation for developing generalized partitioning strategies for control, and showing how it is a central problem to address when deploying a non-centralized control strategy. 
	
	\IEEEpubidadjcol
	
In the literature, the partitioning problem is usually approached in one of the two following ways: i) using the \textit{top-down} aggregation, where we decompose a monolithic system into components, which define the control units; ii) through the \textit{bottom-up} approach, where predefined subsystems are merged into larger control units. However, an approach that works for all categories of systems following a systematic procedure, i.e.\ a generally valid technique, is not present yet in literature. Additionally, the problem of defining the number of control units in the partition is often solved by introducing assumptions or heuristics, whereas the role of the size of these units and their hierarchy is not usually a point of investigation. 
	
In this work, we propose a general partitioning strategy for control, a novel approach expanding and improving on our preliminary results \cite{riccardi_GeneralizedPartitioningStrategy_2024, riccardi_CodeUnderlyingPublication_2024} with stronger mathematical foundations, refined algorithms, generalized metrics, and new strategies to approach the problem. The main contributions we propose are: i) a formal definition of graph equivalence, which allows the definition of the concepts of fundamental system units (FSUs), composite system units (CSUu), and control partition; ii) a novel algorithm for the selection of the FSUs for any graph equivalent to a dynamical system; iii) a global metric, the \textit{partition index}, accounting for the interactions in the network, and the size of CSUs through a \textit{granularity} parameter $\alpha$; iv) optimization-based and algorithmic partitioning approaches, for which the partition index can be specifically characterized, and returning the control partition in terms of groups of FSUs depending on the desired level of granularity.
	
The paper is organized as follows. In the remainder of this section, we present the current partitioning approaches in literature, and we briefly recapitulate some graph theory concepts and notation. In Sec.\ \ref{sec:associated_graph} we propose the concept of graph equivalent to a dynamical system. Then, the novel core concepts of FSU, CSU, control partition, aggregation operation, and the proposed structure of the generalized partitioning strategy are presented in Sec.\ \ref{sec:algorithm_atomic_agents}. The algorithm we propose for the selection of the FSUs is detailed in Sec.\ \ref{sec:algorithm-selection}; while the generalized partitioning strategy, including the new concept of partition index, and the optimization-based and algorithmic approaches are presented in Sec.\ \ref{sec:partitioning-strategy}. We conclude with some application examples for partitioning in Sec.\ \ref{sec:example-network-partitioning}, and a case study illustrating the role of partitioning for the distributed model predictive control of networks of linear and hybrid systems in Sec.\ \ref{sec:example-case-study}. All the experiments and software developed for this article are available at the open-source long-term access repository \cite{riccardi_CodeUnderlyingPublication_2025}.

}
\subsection{Literature Survey} \label{sec:literature}
	
Despite the absence of a generalized approach, partitioning is a relevant topic for the control \cite{chanfreut_SurveyClusteringMethods_2021}. Existing partitioning techniques can be classified into the two fundamental approaches presented below.
\subsubsection{Top-Down Approaches}
These methods address the partitioning problem by decomposing a system into smaller, manageable sub-networks based on network structure, interaction strength, or physical properties. These approaches primarily leverage graph-theoretic metrics \cite{schaeffer_GraphClustering_2007, hernandez_ClassificationGraphMetrics_2011}. In this context, the most widely accepted partitioning metric is modularity \cite{fortunato_CommunityDetectionGraphs_2010}, allowing the definition of partitioning techniques referred to as community detection strategies \cite{malliaros_ClusteringCommunityDetection_2013,  segovia_DistributedModelPredictive_2021a, tang_OptimalDecompositionDistributed_2018}. The modularity metric can be used to capture both the information in the state transition matrix of a linear dynamical system, and the topological information about the network. The scope of modularity-based techniques is to maximize the modularity. Maximization of modularity is known to be an NP-hard problem \cite{blondel_FastUnfoldingCommunities_2008a, schaeffer_GraphClustering_2007}; therefore, polynomial-time algorithmic procedures have been developed for partitioning according to modularity maximization. 

Top-down partitioning approaches have also been developed in specialized contexts for selected applications. For example, in \cite{siniscalchi-minna_NoncentralizedPredictiveControl_2020b} a wind farm is partitioned into control units according to the coupling given by the wake-effect that influences down-stream turbines. In \cite{barreiro-gomez_PartitioningLargescaleSystems_2019}, flow-based distribution systems are considered, focusing on water distribution networks. One relevant feature of the approach proposed in that work is multi-criteria optimization for partitioning. 

\subsubsection{Bottom-Up Approaches}

These methods are aggregative and start from the assumption that a group of autonomous and indivisible FSUs is given a priori. This assumption will generally hold for cooperative groups of mobile agents, such as coalitions of mobile robots, fleets of aerial vehicles, and platoons of autonomous driving cars. However, bottom-up approaches are also used in transportation and power networks.

One recent distributed control setting is coalitional control \cite{fele_CoalitionalControlCooperative_2017b,fele_CoalitionalControlSelfOrganizing_2018} where control theory and game theory are combined. In coalitional control, FSUs aggregate into bigger groups, denoted coalitions, to pursue a common control objective. For each coalition, i.e.\ collection of FSUs, a performance index is computed, which is a function of the aggregated state and input values of the coalition. This index can be interpreted as a transferable utility that can be reallocated among the agents. Partitioning is achieved through an algorithmic procedure, where agents are exchanged among coalitions to maximize the differential gain in the coalitional cost until no further improvement is possible. 

Bottom-up approaches are generally more suited for situations where the definition of the sub-networks is non-static, which can occur in the case of non-stationary networks, where a partitioning technique is required to produce a non-stationary definition of the groups. An example of this approach is in \cite{ananduta_OnlinePartitioningMethod_2019b, ananduta_EventtriggeredPartitioningNoncentralized_2021b}, which considers linear switching systems and event-driven systems. In these cases, conditions triggering the re-partition of the network are defined, together with ad-hoc strategies to perform it. The effects of partitioning on the properties of a control architecture are explored less frequently. An exception is \cite{baldiviesomonasterios_FeasibleSetsCoalitional_2019}, where feasibility regions and robust invariant sets are defined for distributed tube-based MPC.

\subsubsection{Mixed Strategies}
In \cite{chanfreut_FastClusteringMultiagent_2022} coalitional control, a bottom-up approach, is coordinated by a supervisory layer introducing a clustering problem for the selection of the size and composition of the coalitions, which is a top-down decision. Partitioning is achieved through a binary quadratic program where the decision variables are the states of the links between agents, and the cost is a composition of the gradient vectors of the control objective. However, such approach inherits the computation complexity of the general clustering problem, thus requiring the solution of an NP-hard problem.

\subsection{Preliminary Concepts and Notation} \label{sec:progoloma}
A graph \cite{bollobas_ModernGraphTheory_2001} is an ordered pair of sets $\mathcal{G}=(\mathcal{V},\mathcal{E})$ where $\mathcal{V} = \{1,\ldots, n\}$ is the set of $n$ vertices, and  $\mathcal{E}\subseteq\mathcal{V}\times\mathcal{V}$ is the set of the edges. The edges are associated to the vertices through a binary adjacency matrix $A_{\text{adj}}$, where $A_{\text{adj}, (i,j)} = 1$ if and only if an edge $\epsilon_{ij} = (i,j) \in\mathcal{E}$ exists. Therefore, the topology of the graph is specified by the adjacency matrix $A_{\text{adj}}$, and the set of the edges can also be written as $\mathcal{E} = \{(i,j)\,|\,i,j\in\mathcal{V}\wedge A_{\text{adj}, (i,j)} = 1\}$. A subgraph of $\mathcal{G}$ is a graph $\mathcal{S}_i=(\mathcal{V}_i,\mathcal{E}_i)$ representing a part of $\mathcal{G}$. The set of vertices $\mathcal{V}_i$ is a subset of $\mathcal{V}$, i.e.\ $\mathcal{V}_i\subseteq\mathcal{V}$, and the set of the edges is $\mathcal{E}_i = \{(i,j)\,|\,i,j\in\mathcal{V}_i\wedge A_{\text{adj}, (i,j)} = 1\}$, where the topology is still specified by the relevant entries of $A_{\text{adj}}$. For a directed graph $\mathcal{G}$, an edge $\epsilon_{ij} = (i,j)$ denotes an arrow starting from node $i$ and ending in node $j$. A graph $\mathcal{G}$ is weighted if a weighting matrix $W_{\text{adj}}$ assigning to each edge a number is specified in addition to $A_{\text{adj}}$. For each vertex $i\in\mathcal{V}$ we denote by $d_i$ its degree, i.e.\ the number of edges entering or exiting that vertex. In directed graphs, we can specify an in-degree ($d_{i,\text{in}}$) and an out-degree ($d_{i,\text{out}}$), if the edge is respectively ending or starting in the vertex $i$. For a vertex $i$, the neighborhood of $i$ is the set of all vertices connected to it, and we denoted it by $\mathcal{N}_i = \{j\in\mathcal{V} \,\,|\,\,(i,j)\vee(j,i)\in\mathcal{E}\}$. For a subgraph $\mathcal{S}_i=(\mathcal{V}_i,\mathcal{E}_i)$, the frontier is its set of nodes that are connected to nodes outside the subgraph, i.e.\ $\mathcal{F}_i = \{i\in\mathcal{V}_i\,\,|\,\,(i,j)\vee (j,i)\in\mathcal{E}, j\in\mathcal{V}\setminus\mathcal{V}_i\}$.

\section{The Equivalent Graph of a Dynamical System} \label{sec:associated_graph} 
In this section, we introduce the definition of the equivalent graph of a dynamical system for a general class of nonlinear discrete-time systems. Then, we characterize the definition for two important classes of dynamical systems: linear and piecewise affine (PWA) systems. For both classes, we highlight the properties that can be established from their topology, which are relevant for defining a partitioning strategy.

\subsection{General Definition of Equivalent Graph} \label{sub-sec:qeuivalent_graph}
Consider a discrete-time system of the form:
\begin{equation} \label{eq:state-dynamics}
	x(k+1) = f(x(k),u(k)) + g
\end{equation} 
with $x\in\mathcal{X}\subseteq\mathbb{R}^n$, $u\in\mathcal{U}\subseteq\mathbb{R}^p$, with $f$ differentiable over $\Omega = (\mathcal{X},\mathcal{U})$ almost everywhere (i.e.\ except for a Lebesgue set of measure zero), and $g$ is a constant. Moreover, assume that $f(0,0) = 0$. The equivalent graph representation of \eqref{eq:state-dynamics} is a time-dependent graph $\mathcal{G}_k=(\mathcal{V},\mathcal{E}_k, w_k, \tilde{g})$, where $\mathcal{V}$ is the set of the vertices, $\mathcal{E}_k\subseteq\mathcal{V}\times\mathcal{V}$ is the set of the edges, $w_k$ is a weighting function for the edges that implicitly characterizes the topology of the graph, and $\tilde{g}$ is a labeling function for the vertices. To construct the equivalent graph, first, we define the set of vertices: a new vertex is defined for each input and state variable of \eqref{eq:state-dynamics}. These vertices belong respectively to two subsets of vertices, which are $\mathcal{V}_u = \{u_1,\ldots,u_p\}$ and $\mathcal{V}_x = \{x_1,\ldots,x_n\}$. The set of vertices of the equivalent graph representation is thus the union $\mathcal{V} = \mathcal{V}_u \bigcup \mathcal{V}_x$. 
Then, we define the weighting function $w_k$, from which we can also extract the edges of the graph. This function $w_k$ assigns to each couple of vertices a function defined as:
\begin{equation}\label{eq:weight}
	w_k(i,j) = \left\{
	\begin{matrix}
		\frac{\partial f_j(k)}{\partial i} & \text{for} & i\in\mathcal{V}, j\in\mathcal{V}_x \\
		0 & \text{for} & i\in\mathcal{V}, j\in\mathcal{V}_u
	\end{matrix}
	\right.
\end{equation}
From $w_k$, it is possible to distinguish between the sets of input-to-state and state-to-state edges, which are respectively:
\begin{subequations}\label{eq:edges-definition}
	\begin{flalign}
		\mathcal{E}_{ux, k} = \left\{(i,j)\,|\,i\in\mathcal{V}_u,j\in\mathcal{V}_x,w_k(i,j)\neq 0\right\} \\
		\mathcal{E}_{xx, k} =\left\{(i,j)\,|\,i,j\in\mathcal{V}_x,w_k(i,j)\neq 0\right\} 
	\end{flalign}
\end{subequations}
The set of the edges of the equivalent graph is then obtained as the union $\mathcal{E}_k=\mathcal{E}_{ux, k}\bigcup\mathcal{E}_{xx, k}$. We will use an analogous notation for the weighting function referring to the above sets, using $w_{ux, k}$ and $w_{xx, k}$ respectively.
Finally, the labeling $\tilde{g}$ of the vertices is selected to coincide with the constants $g$ of \eqref{eq:state-dynamics} for state vertices, and zero for input vertices:
\begin{equation}
	\tilde{g}(i) = \left\{
	\begin{matrix}
		g_i & \text{for} & i\in\mathcal{V}_x \\
		0 \hfill & \text{for} & i\in\mathcal{V}_u
	\end{matrix}
	\right.
\end{equation}

The equivalent graph $\mathcal{G}$ thus obtained is a variable structure that reflects the nonlinear nature of system \eqref{eq:state-dynamics}. Specifically, for each pair $(x(k),u(k))$, the topology and the weighting of the graph may change, while the set of the vertices is time-independent. We now show how to build the equivalent graph for linear and piecewise linear systems.

\subsection{Equivalent Graph Equivalent of a Linear System}\label{sec:equivalent-linear}
Consider the system:
\begin{equation} \label{eq:linear-state-dynamics}
	x(k+1) = Ax(k) + Bu(k)
\end{equation} 
where $A$ and $B$ are matrices of appropriate dimensions. The structure of the graph $\mathcal{G}$ is now time-independent and presents no labeling. Consequently, the sets of the edges are:
\begin{subequations}\label{eq:linear-edges}
	\begin{flalign}
		\mathcal{E}_{ux} = \left\{(i,j)\,|\,i\in\mathcal{V}_u, j\in\mathcal{V}_x, B_{(j,i)} \neq 0 \right\}\\
		\mathcal{E}_{xx} = \left\{(i,j)\,|\,i, j\in\mathcal{V}_x, A_{(j,i)} \neq 0 \right\}
	\end{flalign}
\end{subequations}  

\subsection{Equivalent Graph of a Piecewise Affine Linear (PWA) System}
Consider a PWA system of the form:
\begin{equation}\label{eq:PWA-dynamics}
	x(k+1) = A^q x(k) + B^q u(k) + g^q \quad \text{for} \quad \begin{bmatrix}
		x(k)\\ u(k)
	\end{bmatrix}\in \mathcal{C}^q,
\end{equation}
for operational modes $q = 1,\ldots, N$, with $\mathcal{C}^1,\ldots, \mathcal{C}^N$ convex polyhedra in the input-state space with non-overlapping interiors \cite{heemels_EquivalenceHybridDynamical_2001}. For such systems, the weighting function $w_k$ \eqref{eq:weight}, the set of the edges $\mathcal{E}$, and the labeling $\tilde{g}$ can change according to the convex polyhedron $\mathcal{C}^q$. Considering the set of the edges $\mathcal{E}$, depending on the transition, the change can occur in $\mathcal{E}_u$, in $\mathcal{E}_x$, in both simultaneously, or not occur. According to this description, the sets of the edges \eqref{eq:edges-definition} for PWA systems assume the characterization depending on the current mode $q = q(x(k), u(k))$:
\begin{subequations}\label{eq:PWA-edges}
	\begin{flalign}
		\mathcal{E}_{ux}^q = \left\{(i,j)\,|\,i\in\mathcal{V}_u, j\in\mathcal{V}_x, B_{(j,i)}^q \neq 0 \right\}\\
		\mathcal{E}_{xx}^q = \left\{(i,j)\,|\,i, j\in\mathcal{V}_x, A_{(j,i)}^q \neq 0 \right\}
	\end{flalign}
\end{subequations} 
and the set of the edges of the equivalent graph is $\mathcal{E}^q = \mathcal{E}_{ux}^q\bigcup\mathcal{E}_{xx}^q$. Therefore, we have a finite set of at most $N$ topologies that can vary according to the operational mode.    

\subsection{Number of Distinct Topologies for a Dynamical System}
\label{sebsec:number-topologies}

The topology of the equivalent graph of a dynamical system is determined by the set of the edges $\mathcal{E}$. As stressed above, for linear systems this topology is static; thus a linear system only has one possible configuration for the edges for a given pair $(A,B)$ of state space matrices. For PWA systems, the possible number of topologies is equal to the number $N$ of different operating modes at most. However, some operating modes may share the same topology and only differ in the formulation of the dynamics. For a general dynamical system of the form \eqref{eq:state-dynamics}, we can define an upper bound on the number of distinct topologies:
\begin{lemma}\label{prop:number-of-topologies}
	The graph associated with system \eqref{eq:state-dynamics} has at most $2^{n(n+p)}$ distinct topologies. If self-edges are not considered, the graph has at most $2^{n(n+p-1)}$ distinct topologies.
\end{lemma}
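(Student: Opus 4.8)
The plan is to prove the bound by a direct counting argument on the \emph{admissible} edges of the equivalent graph, i.e.\ those pairs $(i,j)$ that are permitted to appear at all for some $(x(k),u(k))$. The key observation is that, for a fixed operating point, the topology of $\mathcal{G}_k$ is completely determined by the set of edges $\mathcal{E}_k=\mathcal{E}_{ux,k}\cup\mathcal{E}_{xx,k}$, which is in turn determined by which entries of the weighting function $w_k$ are nonzero. Thus counting distinct topologies reduces to bounding the number of distinct subsets of the pool of admissible edges.

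First I would enumerate the admissible edges. By the definition of $w_k$ in \eqref{eq:weight}, every edge terminates in a state vertex: no edge can enter an input vertex, because $w_k(i,j)=0$ whenever $j\in\mathcal{V}_u$. Consequently the only candidate edges are the input-to-state edges, of which there are $|\mathcal{V}_u|\cdot|\mathcal{V}_x|=pn$, and the state-to-state edges (including the self-edges $x_i\to x_i$), of which there are $|\mathcal{V}_x|^2=n^2$. This yields a total of $n^2+pn=n(n+p)$ admissible edges. I would stress that this is the step where one must be careful not to overcount: the pool is $n(n+p)$ rather than $(n+p)^2$ precisely because edges into input vertices are forbidden.

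Next, since a topology is by definition nothing but its set of present edges, the assignment mapping each realizable topology to the corresponding subset of the admissible-edge pool is injective. Hence the number of distinct topologies is at most the number of subsets of a set of cardinality $n(n+p)$, namely $2^{n(n+p)}$, which establishes the first claim. The bound is an inequality (``at most'') because, depending on the concrete $f$, not every subset need be realizable, and distinct operating modes may share the same topology.

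Finally, for the second claim I would simply remove the $n$ self-edges $x_i\to x_i$ from the admissible pool, leaving $n(n+p)-n=n(n+p-1)$ admissible edges, and re-run the identical subset-counting argument to obtain the bound $2^{n(n+p-1)}$. I do not anticipate any genuine obstacle: the argument is purely combinatorial, and the only subtlety is the correct identification of the admissible-edge pool noted above.
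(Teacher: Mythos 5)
Your proof is correct and follows essentially the same route as the paper's: both arguments observe that edges can only terminate in state vertices, so the topology is determined by which of the $n(n+p)$ admissible partial derivatives (each of the $n$ component functions $f_j$ with respect to each of the $n+p$ variables) are nonzero, giving at most $2^{n(n+p)}$ subsets, with the $n$ self-edges removed from the pool for the second bound. Your write-up is simply a more explicit version of the paper's counting argument, with the helpful added emphasis that the pool is $n(n+p)$ rather than $(n+p)^2$.
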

\begin{proof}
	This statement is shown by considering the definition \eqref{eq:edges-definition}, where for each pair of vertices $(i,j)$, we are interested only in the cases where the derivative is zero or non-zero. Thus, we have two possible combinations for each of the $n$ functions and $p$ or $n$ variables, which provides the result. If self-edges are not considered, the statement holds without accounting for the edges $(i,i)\in\mathcal{E}$ with $i\in \mathcal{V}_x$.
\end{proof}

\section{The Concepts of Composite System Unit and Control Partition, and the Structure of the Generalized Partitioning Strategy} \label{sec:algorithm_atomic_agents} 
\subsection{Composite System Unit and Control Partition}
{ One of the scopes of a partitioning strategy applied to a dynamical system is to select subsystems that can be considered as individual control units for a specified application. In this section, we first define a special type of subsystem, the \textit{composite system unit}, which allows the definition of a generalized partitioning strategy. Subsystems of this type can be systematically represented as graphs. Then, we show how to aggregate these subsystems. Finally, we provide the general definition of partitioning based on the concepts above. First, we consider the abstract definition of subsystem:}
\begin{definition}[Subsystem] \label{def:subsystem}
	For a dynamical system of the form \eqref{eq:state-dynamics}, a subsystem is a group of state dynamics collected into a vector $x^{[i]}$ and whose behavior is defined by $x^{[i]}(k+1) = f^{[i]}(x(k),u(k)) + g^{[i]}$, with $x^{[i]}\in\mathcal{X}^{[i]}\subseteq\mathbb{R}^{n_i}$.
\end{definition}

This definition is not sufficiently specialized to define a generalized partitioning strategy. To this end, we introduce the definition of composite system unit, which allows the systematic selection of input-state vector pairs $(x^{[i]},u^{[i]})$. The choice of these pairs is not unique; however, our characterization is motivated by the scope of defining a generalized partitioning strategy.

\begin{definition}[Composite System Unit] \label{def:control-agent}
	A composite system unit (CSU) is a nonautonomous dynamical system whose inputs affect only the state dynamics of the CSU itself. All the dynamic relations the CSU has with other CSUs occur through dynamical coupling among the states, or direct input-output connections. For the dynamics \eqref{eq:state-dynamics}, a CSU is a subsystem of the form $x^{[i]}(k+1) = f^{[i]}(x(k),u^{[i]}(k)) + g^{[i]}$  with $x^{[i]}\in\mathcal{X}^{[i]}\subseteq\mathbb{R}^{n_i}$, $u^{[i]}\in\mathcal{U}^{[i]}\subseteq\mathbb{R}^{p_i}$, and such that $w(k,l) = 0$ for all $l\in\mathcal{V}_{i,x}$, $k\in\mathcal{V}_{j,u}$.
\end{definition}

From this definition, it follows immediately that a CSU is a subsystem in the sense of Definition \ref{def:subsystem}. We denote a CSU $i$ associated to the input-state pair $(x^{[i]},u^{[i]})$ with the symbol $\mathcal{S}_i$. The graph equivalent to the CSU in Definition \ref{def:control-agent} is constituted by the set of vertices $\mathcal{V}_i = \mathcal{V}_{i,x} \cup \mathcal{V}_{i,u} \subseteq \mathcal{V}$, by the weighting function $w(i,j)$ with $i,j\in\mathcal{V}_i$, by the set of edges $\mathcal{E}_{i, k} = \left\{(i,j)\,|\,i,j\in\mathcal{V}_i, w_{i, k}(i,j) \neq 0 \right\}$, and by the labeling $\tilde{g}_i$. This is the graph $\mathcal{S}_{i, k} = (\mathcal{V}_i,\mathcal{E}_{i, k},w_{i, k},\tilde{g}_i)$.

If we consider linear discrete-time time-invariant systems of the form \eqref{eq:linear-state-dynamics}, then from Definition \ref{def:control-agent} a CSU indexed by $i$ has the form:	
\begin{subequations}\label{eq:agent}
	\begin{align} 
		& x^{[i]}(k+1) = A^{[i]}x^{[i]}(k) + B^{[i]}u^{[i]}(k) + \omega^{[i]}(k)\\
		& \omega^{[i]}(k) = \sum_{j \in\mathcal{N}_i}A^{[ij]}x^{[j]}(k)
	\end{align}	
\end{subequations}
where $x\in\mathcal{X}^{[i]}\subseteq\mathbb{R}^{n_i}$, $u\in\mathcal{U}^{[i]}\subseteq\mathbb{R}^{p_i}$, and $A^{[i]}$, $B^{[i]}$, $A^{[ij]}$ are matrices of appropriate dimensions. The signal $\omega_k^{[i]}$ is the coupling effect that CSU $i$ experiences with its neighboring states $j\in\mathcal{N}_i$. 

In the definition of a general partitioning strategy, a fundamental task is the selection of the smallest possible CSUs constituting the network. We specify this concept through the definition of a fundamental system unit.
\begin{definition}[Fundamental System Unit]\label{def:atomic-control-agent}
	Given a network of dynamical systems, a fundamental system unit (FSU) is the smallest CSU definable through any network decomposition.
\end{definition}	

FSUs represent the smallest individual components that can be selected in a network without losing the nonautonomous property of the dynamics, i.e.\ the smallest subsystem that has at least one control input and one state variable. Before proceeding with the general definition of partitioning, we introduce the aggregation operation for two CSUs, which allows the merging of two distinct dynamics to form a bigger subsystem.

\begin{definition}[Aggregation Operation]\label{def:aggregation}
	Consider the equivalent graph representation $\mathcal{G}_k$ of the dynamical system \eqref{eq:state-dynamics}, and two subsystems of the same dynamical system described by the equivalent graphs $\mathcal{S}_{1, k}$, $\mathcal{S}_{2, k}$. We define the aggregation of the two subsystems over the graph $\mathcal{G}$ as $\mathcal{S}_{(1,2),k} = \left.\left(\mathcal{S}_{1, k}\uplus\mathcal{S}_{2, k}\right)\right\lvert_{\mathcal{G}_k}$:
	\begin{equation*}
		\mathcal{S}_{(1,2),k} = \left(\mathcal{V}_{1}\cup\mathcal{V}_{2},\mathcal{E}_{1, k}\cup\mathcal{E}_{2, k}\cup\mathcal{E}_{(1,2),k},w_{(1,2),k},(\tilde{g}_1,\tilde{g}_2)\right)
	\end{equation*}
	where $w_{(1,2),k} =\{w_k(i,j)\,|\,i,j \in \mathcal{V}_1\cup\mathcal{V}_2\}$, $\mathcal{E}_{(1,2),k} = \{(i,j)\,|\,((i\in\mathcal{V}_1,j\in\mathcal{V}_2)\vee (i\in\mathcal{V}_2,j\in\mathcal{V}_1))\wedge w_k(i,j)\neq 0\}$.
\end{definition}


The following proposition allows the construction of CSUs as the result of the aggregation of other CSUs: 
\begin{proposition} \label{prop:aggregation}
	The aggregation of multiple CSUs is a CSU. 
\end{proposition}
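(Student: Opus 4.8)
The plan is to reduce the statement to the aggregation of two CSUs and then conclude by induction, exploiting the fact that the aggregation of Definition~\ref{def:aggregation} is associative and commutative. Indeed, an aggregate's vertex set is the union of the constituent vertex sets, its edge set is the union of the constituent edge sets together with all cross-edges between the constituents, and its weighting and labeling are read off pointwise from $\mathcal{G}_k$; since set union is associative and commutative and the cross-edges are fully determined by the combined vertex set (the pairwise cross-edges $1\text{-}2$, $1\text{-}3$, $2\text{-}3$ appear regardless of the order in which the two $\uplus$ operations are applied), the result of aggregating $m$ CSUs is independent of the order of the pairwise operations. Hence it suffices to show that $\mathcal{S}_{(1,2),k} = \left.\left(\mathcal{S}_{1,k} \uplus \mathcal{S}_{2,k}\right)\right\lvert_{\mathcal{G}_k}$ of two CSUs $\mathcal{S}_1$, $\mathcal{S}_2$ is again a CSU, and then iterate.

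For the two-unit case I would verify the two requirements of Definition~\ref{def:control-agent} separately. First, that $\mathcal{S}_{(1,2)}$ is a subsystem in the sense of Definition~\ref{def:subsystem}: collecting states into $x^{[(1,2)]} = (x^{[1]}, x^{[2]})$ and inputs into $u^{[(1,2)]} = (u^{[1]}, u^{[2]})$, I stack the two update maps to obtain $x^{[(1,2)]}(k+1) = f^{[(1,2)]}(x(k), u^{[(1,2)]}(k)) + g^{[(1,2)]}$ with $f^{[(1,2)]} = (f^{[1]}, f^{[2]})$ and $g^{[(1,2)]} = (g^{[1]}, g^{[2]})$; here the first block uses only $u^{[1]}$ and the second only $u^{[2]}$, so $f^{[(1,2)]}$ is a legitimate function of $(x, u^{[(1,2)]})$, and the offsets match the labeling $(\tilde{g}_1, \tilde{g}_2)$ of the aggregated graph.

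Second, and this is where the CSU structure is used, I would check the defining weight condition, namely that no input vertex external to the aggregate feeds any of its state vertices. Take any input vertex $a \notin \mathcal{V}_{1,u} \cup \mathcal{V}_{2,u}$ and any state vertex $b \in \mathcal{V}_{1,x} \cup \mathcal{V}_{2,x}$. If $b \in \mathcal{V}_{1,x}$, then since $a$ is external to $\mathcal{S}_1$, the property of Definition~\ref{def:control-agent} for $\mathcal{S}_1$ gives $w(a,b) = 0$; the symmetric argument settles $b \in \mathcal{V}_{2,x}$. Thus $u^{[(1,2)]}$ is exactly the set of inputs acting on $x^{[(1,2)]}$. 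It remains to confirm that the cross-edges $\mathcal{E}_{(1,2),k}$ introduced by the aggregation are admissible: by the weighting rule~\eqref{eq:weight} no edge can terminate on an input vertex, while an input-to-state cross-edge (say from $\mathcal{V}_{1,u}$ to $\mathcal{V}_{2,x}$) is excluded because such an input is external to $\mathcal{S}_2$. Every cross-edge is therefore a state-to-state coupling, precisely the coupling among states permitted by the definition, and the aggregate's inputs affect only the aggregate's states.

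The main obstacle I anticipate is bookkeeping rather than depth: one must track the partition of the aggregate's vertices into state and input vertices and keep ``external'' relative to the \emph{aggregate}, so that the inputs $\mathcal{V}_{2,u}$ --- external to $\mathcal{S}_1$ before merging --- become internal afterward. This is exactly why the argument closes, since any input that remains external to the aggregate is external to both constituents and is killed by whichever CSU owns the affected state. With the two-unit case in hand, an induction on the number of aggregated CSUs, invoking associativity to peel off one unit at each step, completes the proof.
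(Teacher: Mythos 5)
Your proof is correct and follows essentially the same route as the paper's own argument: verification by construction that, for two CSUs, the aggregation adds only state-to-state cross-edges (since each constituent's CSU property kills input-to-state edges from inputs external to it), so the aggregate satisfies Definition~\ref{def:control-agent}. You are in fact more complete than the paper, which treats only the two-unit case and leaves the extension to multiple CSUs implicit, whereas you make the associativity of $\uplus$ and the induction explicit.
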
	
\begin{proof}
	The statement can be verified by construction. 
\end{proof}

We conclude this section by providing the general definition of partitioning for a dynamical system:
\begin{definition}[Control Partition of a Dynamical System]
	For a dynamical system of the form \eqref{eq:state-dynamics} with equivalent graph $\mathcal{G}_k$, a control partition $\mathcal{P}_k$  is defined as a collection of CSUs:
	\begin{equation}
		\mathcal{P}_k = \{\mathcal{S}_{1,k},\ldots,\mathcal{S}_{m,k}\} \quad \text{s.t.} \quad \left.\left(\biguplus_{i=1}^m \mathcal{S}_{i,k}\right)\right\lvert_{\mathcal{G}_k} = \mathcal{G}_k.
	\end{equation}
\end{definition}
This definition of control partition comprehends both non-overlapping and overlapping partitionings thanks to the definition of aggregation operation.

\subsection{The Structure of the Generalized Partitioning Strategy}\label{sec:introduciton-partitioning}
In this section, we present our novel generalized partitioning technique\footnote{The optimization problem and algorithms, as well as the examples and case studies presented below, can be accessed through the repository \cite{riccardi_CodeUnderlyingPublication_2024}.}. This technique applies to all types of dynamical systems that can be represented in form \eqref{eq:state-dynamics}. Here, we describe how the algorithms, metrics, and optimization problem presented in the next sections merge to construct the generalized partitioning strategy. A generalized partitioning technique consists of two main steps:
\begin{enumerate}
	\item Selection of the FSUs, performed through an algorithmic procedure.
	\item Partitioning by aggregation, achieved either by solving an integer optimization problem, or through an algorithm. 
\end{enumerate}
For step 1), we describe the algorithmic approach in Section \ref{sec:algorithm-selection}. This algorithm is designed to work over the equivalent graph representation of the dynamical system.

Once the FSUs are defined, we are presented with a collection of subsystems that can not be further divided without losing the non-autonomicity characteristic. By aggregating these subsystems, we obtain the desired partition of the system. To this end, first, we introduce a novel metric in Section \ref{sec:partition-index}, which we call the partition index. This is a generalized global metric that needs information about the entire structure of the network and can be characterized according to the partitioning procedure that one wants to deploy. Then, we formulate the integer optimization problem to obtain the best value of the partition index in Sec.\ \ref{subsec:optimization-partitioning}. Alternatively, two algorithmic approaches to solve the same problem with a different characterization of the partition index are presented in Sec.\ \ref{subsec:algorithmic-partitioning} and \ref{subsec:improved-algorithmic-partitioning}. All strategies have advantages and disadvantages, which will be discussed in the respective sections. 

\section{Algorithm for the Selection of FSUs} \label{sec:algorithm-selection} 
{Here, we present the algorithm for the selection of the FSUs. This algorithm is general and can be applied to any network with structure \eqref{eq:state-dynamics}. Given the time-dependence of \eqref{eq:state-dynamics}, this algorithm is potentially applied at each time step $k$, but in the following we omit the subscript $k$ for simplicity. Also, the algorithm can be specialized to exploit the specific structure of the dynamics for linear and PWA systems. Before proceeding, we introduce some additional notation that will be used in the algorithm. We will indicate the FSUs as subsystems $\mathcal{A}_i$, for $i = 1,\ldots,N_{\textrm{FSU}}$, where $N_{\textrm{FSU}} \leq |\mathcal{U}|$,  and the collection of all the FSUs as $\mathcal{A} = \{\mathcal{A}_1,\ldots,\mathcal{A}_{N_{\textrm{FSU}}}\}$.  Also, we will use the additional set $\mathcal{L}\subset\mathcal{X}$, indicating the state nodes that remain to be assigned to the FSUs after preliminary operations are complete.
	
The general algorithm for selecting FSUs consists of three main steps that we present in the following in detail. These steps will be performed in a specific order as a part of the iterative procedure illustrated in Alg.\ \ref{alg:atomic}:
\begin{enumerate}
	\item \textit{Selection of the roots of the FSUs}.  
	To select the FSUs, we first select their anchor points, one for each FSU. We call these anchor points the roots of the FSUs. These roots consist of at least one input node and one state node directly connected by an edge. The selection of the roots is performed on the subgraph $(\mathcal{V},\mathcal{E}_{ux},w_{ux},\tilde{g})$, and works as follows. We start by assuming to have one FSU $\mathcal{A}_i\in\mathcal{A}$ for each input node $i\in\mathcal{U}$, where the indexing of the FSUs in the collection $\mathcal{A}$ may vary as the procedure progresses. Then, we consider for each input node in $\mathcal{U}$ the set of the edges connecting them to the state nodes: for $i\in\mathcal{U}$, if $w_{ij}\neq 0$, with $j\in\mathcal{X}$, we assign the state node $j$ to the FSU $\mathcal{A}_l$ to which the input node $i$ belongs. If we encounter a state node $j\in\mathcal{X}$ connected to two or more input nodes in $\mathcal{U}$, the entire group constitutes an indivisible root, and all are merged into the same FSU. Consequently, we update the indexing of the FSUs in the collection $\mathcal{A}$. We can have a maximum number of roots, and therefore of FSUs, equal to $N_{\textrm{FSU}} \leq |\mathcal{U}|$. All the unassigned state nodes remaining from the selection of the roots are collected into the list $\mathcal{L}$ that will be used in the next steps.
	\item \textit{Forward assignment of the state nodes}. In this step, we perform the assignment of state nodes not belonging to the roots of the FSUs already, and for which a directed edge from the state nodes in the roots exists. We call this procedure \textit{forward assignment of the state nodes}. Here we consider the graph $(\mathcal{X},\mathcal{E}_{xx}, w_{xx}, \tilde{g})$. For each unassigned node $j\in\mathcal{L}$, we consider the edge $w_{ij}$ starting from a state node $i\in\mathcal{X}$ in the roots and ending in $j$ that has the largest absolute value, i.e.\ $\max_{i}|w_{ij}|$. If this $w_{ij}$ exists, we assign the node $j\in\mathcal{L}$ to the FSU to which the state node $i=\arg\max_{i}|w_{ij}|$ belongs, and we update the set $\mathcal{L}$. If such $w_{ij}$ does not exist, we proceed with the assignment procedure for the next state node in $\mathcal{L}$ until no further forward assignments are possible. We call this phase \textit{forward assignment} since we look at the edges starting from the states in the roots of the FSUs and that are directed toward the periphery of the equivalent graph.
	\item \textit{Backward assignment of the state nodes}. This step of the algorithm assigns the state nodes in $\mathcal{L}$ that have no forward connection from the roots. Therefore, we call this step \textit{backward assignment} because we consider the connections from the nodes belonging to the periphery of the graph toward the direction of the roots. For each remaining $i\in\mathcal{L}$, following the same principle in step 2), we select the node $j$ belonging to an FSU such that $j=\arg\max_j |w_{ij}|$. We repeat the procedure for the next state node in $\mathcal{L}$ until we have checked all the nodes once. 
\end{enumerate}
\begin{algorithm}
	\DontPrintSemicolon
	\caption{Selection of FSUs}\label{alg:atomic}
	\textbf{Part 1 - Selection of the roots}\;
	\KwData{$\left(\mathcal{V},\mathcal{E}_{ux},w_{ux},\tilde{g}\right)$}
	Perform 1) in Sec.\ \ref{sec:algorithm-selection}: selection of the roots\;
	return $\left(\mathcal{A}^{[0]},\mathcal{L}^{[0]}\right)$\;
	\textbf{Part 2 - Selection of the FSUs}\;
	\KwData{$\left(\mathcal{X},\mathcal{E}_{xx},w_{xx},\tilde{g}\right)$}
	$k$ $\gets$ 0\;
	\While{$\mathcal{L}^{[k]}$ not empty $\vee$ $\mathcal{L}^{[k]}$ connected}{
		Perform 2) in Sec.\ \ref{sec:algorithm-selection}: forward assignment\;
		$\left(\mathcal{A}^{[k+1]},\mathcal{L}^{[k+1]}\right)$ $\gets$ $\left(\mathcal{A}^{[k]},\mathcal{L}^{[k]}\right)$\;
		Perform 3) in Sec.\ \ref{sec:algorithm-selection}: backward assignment\;
		$\left(\mathcal{A}^{[k+2]},\mathcal{L}^{[k+2]}\right)$ $\gets$ $\left(\mathcal{A}^{[k+1]},\mathcal{L}^{[k+1]}\right)$ \;
		$k$ $\gets$ $k+2$
	}
\end{algorithm}
	
\begin{remark}
	If the system considered is time-varying, its topology can change at each time step. For such systems, the algorithm for selecting the FSUs has to be applied each time there is a variation in the topological structure, i.e.\ in the edges or their weights.
\end{remark}
\begin{remark}
	If outputs are also specified in the dynamics of the system, the procedure defined above still holds, but it is necessary to assign the output nodes to each CSU in the same way as done for the state nodes. 
\end{remark}

\section{The Partitioning Strategy}\label{sec:partitioning-strategy} 
In this section, we present a novel metric that will be used to address the partitioning problem in a generalized setting. We call this metric the \textit{partition index}. Then, we propose an algorithmic procedure to solve the problem, which returns the partition, i.e.\ a set of CSUs, for the required granularity. This approach provides a sub-optimal solution but requires polynomial time to be computed. In Sec.\ \ref{subsec-partitioning_modular}, we empirically show how this first approach offers a good trade-off between optimality of the solution and computational complexity.
Moreover, we show that when the partition index is a quadratic function, the partitioning problem can be formulated as an integer quadratic program (IQP). This will require a novel characterization of the partition index, which must be a quadratic function. In Sec.\ \ref{subsec-partitioning_modular}, we show how this approach provides the best solution to the problem, but the downside is that it is an NP-hard problem.

\subsection{The partition index}\label{sec:partition-index}

	For a specific partition $\mathcal{P}$ we denote the \textit{partition index} by $p^{\text{idx}}(\mathcal{P})$. We define this metric to account for both the interactions that occur inside a CSU, and among the CSUs. In this sense, the partition index is a global metric since it requires global information about the system to be computed. 
	
	Consider a partition $\mathcal{P} = \{\mathcal{S}_1,\ldots,\mathcal{S}_{N_{\text{CSU}}}\}$, with $N_{\text{CSU}}\leq N_{\textrm{FSU}}\leq p$ where each $\mathcal{S}_i$ is a CSU in the sense of Definition \ref{def:control-agent}. For CSU $\mathcal{S}_i$ we define following measures: the intra-CSU interaction $W^{\text{intra}}_{\mathcal{S}_i}$, the inter-CSU interaction $W^{\text{inter}}_{\mathcal{S}_i}$, and the size of the CSU $W^{\text{size}}_{\mathcal{S}_i}$, i.e.\ the number of FSUs belonging to $\mathcal{S}_i$. The components $ W_{\mathcal{S}_i}^{\text{inter}}, W_{\mathcal{S}_i}^{\text{intra}},  W_{\mathcal{S}_i}^{\text{size}}$ of the partition index \eqref{eq:partition_index} can be constructed through the equivalent graph representation. For this, consider a non-linear system of the form \eqref{eq:state-dynamics}, with equivalent graph representation $\mathcal{G}=(\mathcal{V},\mathcal{E}, w, \tilde{g})$ where the dependence on $k$ is omitted. A CSU $\mathcal{S}_i$ belonging to such system has an equivalent graph $\mathcal{S}_i=(\mathcal{V}_i,\mathcal{E}_i, w_i, \tilde{g}_i)$.
	For CSU $\mathcal{S}_i$ we can define the intra-CSU, and the inter-CSU interactions respectively as
	\begin{align}
		& W_{\mathcal{S}_i}^{\text{intra}} = \sum_{s,t\in\mathcal{V}_{i}} |w_{i}(s,t)| \label{eq:intra-agent-nonlinear} \\
		& W_{\mathcal{S}_i}^{\text{inter}} = \sum_{s\in\mathcal{F}_{\mathcal{S}_i}} \sum_{j\in\mathcal{N}_{\mathcal{S}_i}} \sum_{t\in\mathcal{N}_{s}\cap\mathcal{V}_j} |w_{i}(s,t)| + |w_{j}(t,s)|  \label{eq:inter}
	\end{align}
	where $\mathcal{F}_{\mathcal{S}_i}$ denotes the frontier of the CSU $\mathcal{S}_i$, $\mathcal{N}_{\mathcal{S}_i}$ the set of neighbor CSUs of $\mathcal{S}_i$, and $\mathcal{N}_{s}$ the set of neighbor nodes of node $s$. 
	Note that $W_{\mathcal{S}_i}^{\text{size}}$ is equal to the square of the number of FSUs $\mathcal{A}$ belonging to the CSU $\mathcal{S}_i$, i.e.\ $W_{\mathcal{S}_i}^{\text{size}} = |\mathcal{V}_{i}|^2$.
	The partition index is defined as:
	\begin{definition}[Partition Index]
		The partition index of a network of CSUs for a specified set-valued function $h$ and a given partition $\mathcal{P} = \{\mathcal{S}_1,\ldots,\mathcal{S}_m\}$ is a global metric given by:
		\begin{equation}\label{eq:partition_index}
			p^{\textnormal{idx}}(\mathcal{P}) = h\left(\sum_{i=1}^{m}W_{\mathcal{S}_i}^{\textnormal{inter}}, \sum_{i=1}^{m} W_{\mathcal{S}_i}^{\textnormal{intra}}, \sum_{i=1}^{m} W_{\mathcal{S}_i}^{\textnormal{size}}, \alpha\right)
		\end{equation}	
		where $\alpha>0$ is a scalar parameter allowing for the selection of the granularity of the partitioning
	\end{definition}
	In the general case of the nonlinear dynamics \eqref{eq:state-dynamics}, $p^{\textnormal{idx}}(\mathcal{P})$ is time-dependent, because is is based on a time varying graph. The framework for partitioning that we propose is based on the optimization of $p^{\textnormal{idx}}(\mathcal{P})$, and works for every characterization of $h$. In the following we propose one nonlinear form for $h$ that we will use for algorithmic partitioning in Sec.\ \ref{subsec:algorithmic-partitioning}; and one quadratic for the optimization-based strategy in Sec.\ \ref{subsec:optimization-partitioning}. Other strategies using different characterizations of $h$ can be defined by the same approach.
	
	\subsection{Algorithmic partitioning}\label{subsec:algorithmic-partitioning}  
	For the first characterization of the partition index \eqref{eq:partition_index}, we assume a function that accounts for the ratio between the intra- and inter-CSU interactions. Since the algorithmic approach will try to maximize the partition index, then we assume that terms $W_{\mathcal{S}_i}^{\text{intra}}$ are at the numerator, and $W_{\mathcal{S}_i}^{\text{inter}}$ at the denominator. This approach will favor merging strongly coupled FSUs into the same CSU, while assigning weakly coupled FSUs to different CSUs. This approach is inspired by the modularity metric \cite{newman_ModularityCommunityStructure_2006a, brandes_MaximizingModularityHard_2006}, and the motivation for using it in non-centralized control is that such approaches modular partitioning usually improve performances of the architecture by minimizing the interaction among subsystem. Also, advantages in communication costs and privacy are present. However, it is important to stress that this partitioning approach will not provide the optimal partitioning in absolute terms (as we will show in Sec.\ \ref{sec:example-case-study}), and which motivates the introduction of the abstract formulation \eqref{eq:partition_index}. Moreover, we include in the \eqref{eq:partition_index} a novel term that accounts for the size of the resulting CSUs, according to the parameter $\alpha$ allowing for selecting the desired granularity of the partitioning: 
	
	\begin{equation} \label{eq:metric_greedy}
		p^{\textrm{idx}}(\mathcal{P}) = \frac{\displaystyle  \sum_{i=1}^{m}W_{\mathcal{S}_i}^{\text{Intra}}}{\displaystyle 1 + \sum_{i=1}^{m}W_{\mathcal{S}_i}^{\text{Inter}}} + \frac{\alpha}{\displaystyle 1 + \sum_{i=1}^{m} W_{\mathcal{S}_i}^{\text{size}}}
	\end{equation}
	This index grows when the interactions among FSUs in the same CSU increase, the interactions among CSUs decrease, and the sizes of the CSUs decrease. A greedy algorithm maximizing this index is reported in Alg.\ \ref{alg:greedy}. The time complexity of this algorithm is $O(N_{\textrm{FSU}}^3)$ because for each of the $N_{\textrm{FSU}}$ iterations it is necessary to evaluate the immediate gain in \eqref{eq:metric_greedy} of the assignment of every unassigned FSU to each CSU, which can be at most $N_{\textrm{FSU}}$, and this operation has to be performed $N_{\textrm{FSU}}$ times, until no more FSUs have to be assigned.
	
	In the algorithm, we start by creating a list of unassigned FSUs $\mathcal{L}$, and an empty partition $\mathcal{P}$, which is a list with $N_{\textrm{FSU}}$ entries, where the $i$-th entry is itself a sub-list denote by $\mathcal{P}_i$, also initialized as empty. The partition index corresponding to the empty allocation at the first step is $p^{\textrm{idx}}(\mathcal{P}) = 0$. At each iteration, the algorithm tries all possible assignments of the elements in $\mathcal{L}$ to all the sub-lists $\mathcal{P}_i$. For each assignment, it evaluates which element of $\mathcal{L}$ maximizes the immediate return given by the difference in \eqref{eq:metric_greedy} evaluated for $\mathcal{P}$ and for $\mathcal{P}^{\text{New}}$, where the latter is the candidate next partitioning under evaluation for an element of $\mathcal{L}$. This operation provides the maximum immediate gain $\Delta p^{\textrm{max}}$, which is re-initialized to minus infinity at each iteration of the while-loop. If the assignment of FSU $\mathcal{A}_i$ to the sub-list $\mathcal{P}_j$ gives a $\Delta p^{\textrm{idx}} > \Delta p^{\textrm{max}}$, then we store the candidate best allocation in the variables $\mathcal{A}_{\textrm{Next}}$, $j_{\textrm{Next}}$. When all possible evaluations in $\mathcal{L}$ are performed, the resulting $\mathcal{A}_{\textrm{Next}}$ is assigned to $\mathcal{P}_{j_{\textrm{Next}}}$, and removed from $\mathcal{L}$. The algorithm stops when $\mathcal{L}$ is empty, returning the partition $\mathcal{P}$.  
	
	
	\SetKwComment{Comment}{$\#$ }{ }
	
	\begin{algorithm}
		\DontPrintSemicolon
		\caption{Greedy algorithm for partitioning}\label{alg:greedy}
		\KwData{$\mathcal{A} = \{\mathcal{A}_1,\ldots,\mathcal{A}_{N_{\textrm{FSU}}}\}$}
		\KwResult{$\mathcal{P}$}
		$\mathcal{L}$ $\gets$ $\mathcal{A}$, $\mathcal{P}$ $\gets$ $\{\mathcal{P}_1,\ldots,\mathcal{P}_{N_{\textrm{FSU}}}\}$, $p^{\textrm{idx}}(\mathcal{P})$ $\gets$ 0\;
		\While{$length(\mathcal{L})$ $ > 0$}{
			$\Delta p^{\textrm{max}}$ $\gets$ $-\infty$, $\mathcal{A}_{\textrm{Next}}$ $\gets$ None, $j_{\textrm{Next}}$ $\gets$ None\;
			\For{$i$ = $1,\ldots,length(\mathcal{L})$}{
				$\mathcal{A}_i$ $\gets$ $\mathcal{L}_i$\;
				\For{$j$ = $1,\ldots,length(\mathcal{P})$}{
					$\mathcal{P}^{\text{New}}$ $\gets$ $\mathcal{P}$ s.t. $\mathcal{P}_j$.append($\mathcal{A}_i$)\;
					$\Delta p^{\textrm{idx}}$ $\gets$ $p^{\textrm{idx}}(\mathcal{P}^{\text{New}}) - p^{\textrm{idx}}(\mathcal{P})$\;
					\If{$\Delta p^{\textnormal{idx}}$ $>$ $\Delta p^{\textnormal{max}}$}{
						$\mathcal{A}_{\textrm{Next}}$ $\gets$ $\mathcal{A}_i$, $j_{\textrm{Next}}$ $\gets$ j, $\Delta p^{\textrm{max}}$ $\gets$ $\Delta p^{\textrm{idx}}$\;
					}
				}
			}
			$\mathcal{P}$ $\gets$ $\mathcal{P}$ s.t. $\mathcal{P}_{j_{\textrm{Next}}}$.append($\mathcal{A}_{\textrm{Next}}$), $\mathcal{L}$.remove($\mathcal{A}_{\textrm{Next}}$)\;
		}
	\end{algorithm}

{\subsection{Improved algorithmic partitioning}\label{subsec:improved-algorithmic-partitioning}   
	We improve the partition quality returned by the greedy Algorithm \ref{alg:greedy} by introducing a local refinement step at the expense of higher computational complexity. Specifically, once the assignment of an FSU to a CSU is performed, we can run a nested evaluation of the partition, going through all the FSUs already allocated and checking whether relocating them to a different CSU improves the value of the partition index, and iterating this procedure until no further improving by swapping is possible. A similar idea was proposed in \cite{blondel_FastUnfoldingCommunities_2008a}, where an analogous procedure constitutes the overall partitioning strategy.
	The local refinement routine is reported in Alg.\ \ref{alg:greedy-improved}, and the improved algorithm is obtained by performing the steps in Alg.\ \ref{alg:greedy-improved} right after the last step in the while loop of Algorithm \ref{alg:greedy}, i.e.\ after $\mathcal{L}$.remove($\mathcal{A}_{\textrm{Next}}$) in line 11. Since in the worst case Alg.\ \ref{alg:greedy-improved} re-runs three times through all the FSUs in a while loop, the overall computational complexity of the improved algorithm is $O(N_{\textrm{FSU}}^4)$.
	\begin{algorithm}
		\DontPrintSemicolon
		\caption{Partition refinement}\label{alg:greedy-improved}
		\KwData{$\mathcal{P}$}
		\KwResult{$\mathcal{P}$}
		$\Delta p^{\textrm{max}}$ $\gets$ $0$, Exit\_Condition $\gets$ False \;
		\While{Exit\_Condition $=$ False}{
			$\mathcal{A}_{\textrm{Next}}$ $\gets$ None, $j_{\textrm{Next}}$ $\gets$ None, $j_{\textrm{Prev}}$ $\gets$ None\;
			\For{$i$ = $1,\ldots,length(\mathcal{P})$}{
				\For{$j$ = $1,\ldots,length(\mathcal{P}_i)$}{
					$\mathcal{A}_{\text{Test}}$ $\gets$ $\mathcal{P}_{i,j}$ \; $\mathcal{P}^{\text{New}}$ $\gets$ $\mathcal{P}$ s.t. $\mathcal{P}_i$.remove($\mathcal{A}_{\text{Test}}$)\;
					\For{$k$ = $1,\ldots,length(\mathcal{P})$}{
						\If{k != i}{
							$\mathcal{P}^{\text{New}}$ $\gets$ $\mathcal{P}^{\text{New}}$ s.t. $\mathcal{P}_k$.append($\mathcal{A}_{\text{Test}}$)\;
							$\Delta p^{\textrm{idx}}$ $\gets$ $p^{\textrm{idx}}(\mathcal{P}^{\text{New}}) - p^{\textrm{idx}}(\mathcal{P})$\;
							\If{$\Delta p^{\textrm{idx}} > \Delta p^{\textrm{max}}$}{
								$\mathcal{A}_{\textrm{Next}}$ $\gets$ $\mathcal{A}_{\textrm{Test}}$\;
								$j_{\textrm{Next}}$ $\gets$ $k$, $j_{\textrm{Prev}}$ $\gets$ $i$ \;
							}
						}
						
					}
				}
			}
			\If{$\mathcal{A}_{\textrm{Next}}$ != None}{
				$\mathcal{P}$ $\gets$ $\mathcal{P}$ s.t. $\mathcal{P}_{j_{\textrm{Prev}}}$.remove($\mathcal{A}_{\text{Next}}$)\;
				$\mathcal{P}$ $\gets$ $\mathcal{P}$ s.t. $\mathcal{P}_{j_{\textrm{Next}}}$.append($\mathcal{A}_{\text{Next}}$)\;
			}\Else{
				Exit\_Condition $\gets$ True\;
			}
			
		}
	\end{algorithm}
}

\subsection{Optimization-based partitioning}\label{subsec:optimization-partitioning}   
The characterization of the partition index proposed in \eqref{eq:metric_greedy} can be interpreted as a nonlinear mixed-integer metric, for which, in general, it is not possible to formulate programming problems allowing for a global optimum. To address this limitation, in this section we propose a characterization of the partition index \eqref{eq:partition_index} that is a mixed-integer quadratic function. With this choice, we will define a multi-objective optimization problem that will try to balance the difference between $W_{\mathcal{S}_i}^{\text{inter}}$ and $W_{\mathcal{S}_i}^{\text{intra}}$, while accounting for the size of the resulting CSUs through the product of $\alpha$ and $W_{\mathcal{S}_i}^{\text{size}}$. The approach is conceptually similar to the one used to derive \eqref{eq:metric_greedy}, but provides a mixed-integer quadratic metric. Consequently, the optimization-based strategy for partitioning is based on the solution of an Integer Quadratic Programming (IQP) problem, for which an optimal solution can be obtained.

Under these considerations, we introduce the binary variables $\delta_{ij}\in\{0,1\}$ to specify whether or not an FSU $\mathcal{A}_i$ belongs to a CSU $\mathcal{S}_j$:
\begin{equation}
	\delta_{ij} = 1  \Longleftrightarrow \mathcal{A}_i \in \mathcal{S}_j
\end{equation}
All decision variables $\delta_{ij}$ are collected into the vector $\delta$. The total number of entries of $\delta$ is $N_{\textrm{FSU}}^2\leq p^2$, since the range of $i,j$ is $1,\ldots,N_{\textrm{FSU}}$. The number of non-empty CSUs $\mathcal{S}_i$ that the optimization procedure will generate is unknown a priori, but it cannot exceed the number of FSUs by definition.
Using the binary variables $\delta$, we can rewrite the terms in \eqref{eq:inter}-\eqref{eq:partition_index} respectively as:

\setlength{\arraycolsep}{0.0em}
\begin{eqnarray} \label{eq:IQP_metric_1}
	W^{\text{inter}}(\delta) = \sum_{m = 1}^{N_{\textrm{FSU}}} \sum_{i = 1}^{N_{\textrm{FSU}}}\sum_{\substack{j = 1\\ j\neq i}}^{N_{\textrm{FSU}}}\sum_{\substack{l = 1\\ l\neq m}}^{N_{\textrm{FSU}}}\delta_{i,m}\delta_{j,l}\left(|w(i,j)| \right. \quad \nonumber \\ \left. + |w(j,i)|\right)  \quad
\end{eqnarray}
\begin{eqnarray}\label{eq:IQP_metric_2}
	W^{\text{intra}}(\delta) = \sum_{m = 1}^{N_{\textrm{FSU}}} \sum_{i = 1}^{N_{\textrm{FSU}}}\sum_{j = 1}^{N_{\textrm{FSU}}}\delta_{i,m}\delta_{j,m}\left(|w(i,i)|\right. \nonumber \\ +|w(i,j)|+  
	\left.|w(j,i)|+  |w(j,j)|\right) 
\end{eqnarray}
\setlength{\arraycolsep}{5pt}
\begin{equation}\label{eq:IQP_metric_3}
	W^{\text{size}}(\delta) = \sum_{m = 1}^{N_{\textrm{FSU}}} \left(\sum_{i = 1}^{N_{\textrm{FSU}}} \delta_{i,m}\right)^2  
\end{equation}
The network partition is thus expressed as a function of vector $\delta$, i.e.\  $\mathcal{P}(\delta)$. Additionally, we include a set of constraints on assigning the FSUs to the CSUs so that we get a non-overlapping partitioning. This requirement is codified by imposing that the sum of variables $\delta_{i,j}$ over the index $j$ is equal to one, i.e.\ each FSU must belong only to one CSU. The resulting IQP problem is:
\begin{flalign}\label{eq:IQP}
	\min_\delta \quad &  p^{\textrm{idx}}(\mathcal{P}(\delta)) = W^{\text{inter}}(\delta) - W^{\text{intra}}(\delta) +\alpha  W^{\text{size}}(\delta) \nonumber \\
	\textrm{s.t.} \quad  &\sum_{j= 1}^{N_{\textrm{FSU}}}\delta_{ij} = 1  \qquad \forall i \\ 
	& \delta_{ij} \in \{0,1\} \hfill \nonumber
\end{flalign}
Solving the IQP problem \eqref{eq:IQP} provides the partition minimizing the partition index \eqref{eq:partition_index} for a given choice of the granularity parameter $\alpha$. Such optimization problems are known to be NP-hard \cite{conforti_IntegerProgramming_2014}; therefore, their scalability is limited, as will be shown in the examples in Sec.\ \ref{sec:example-selection}.

\begin{remark}
	For the problem definition \eqref{eq:IQP}, and for the algorithmic approach in Sec.\ \ref{subsec:algorithmic-partitioning}, it is possible to find a range for the selection of $\alpha$. In this range, for large values of $\alpha$, the resulting partitioning will be the set of individual FSUs, i.e.\ $\mathcal{P} = \{\mathcal{A}_1,\ldots,\mathcal{A}_{N_{\textrm{FSU}}}\}$. Conversely, for small values of $\alpha$, the resulting partitioning will be a single CSUs comprehending all FSUs, which is the entire system, i.e.\ $\mathcal{P} = \{\mathcal{S}_1\}$ with $\mathcal{S}_1 = \{\mathcal{A}_1,\ldots,\mathcal{A}_{N_{\textrm{FSU}}}\}$. Other choices of $\alpha$ in the range will lead to different levels of aggregation of the FSUs.
\end{remark}

\section{Examples: Application of the Generalized Partitioning Strategy}\label{sec:example-network-partitioning} 
\subsection{Selection of the FSUs}\label{sec:example-selection}


\begin{figure}[t]
	\centering
	\subfloat[]{
		\includegraphics[width=0.75\linewidth]{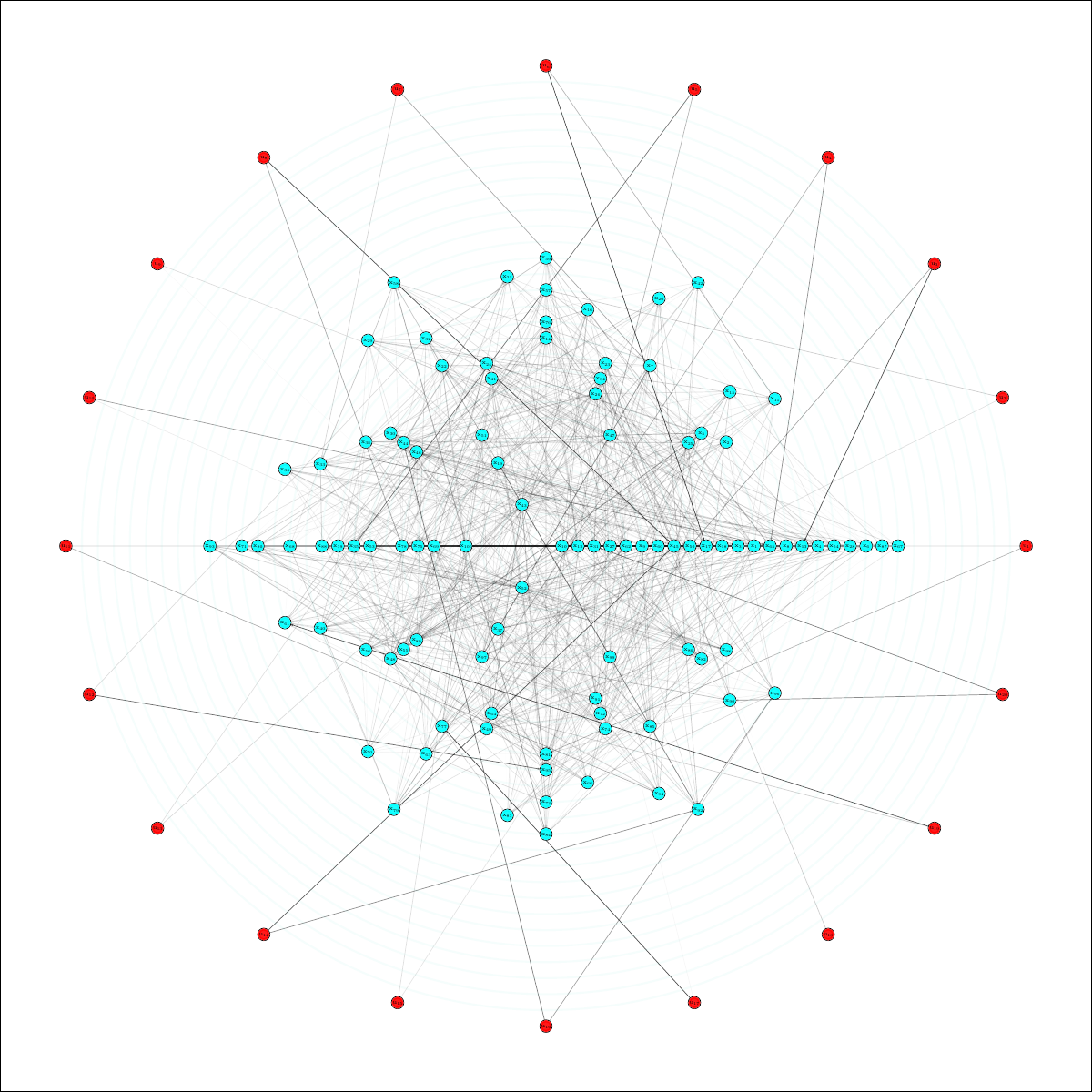}
		\label{fig:Network_States_100_Inputs_20network_map}
	}
	\hfill
	\subfloat[]{
		\includegraphics[width=0.75\linewidth]{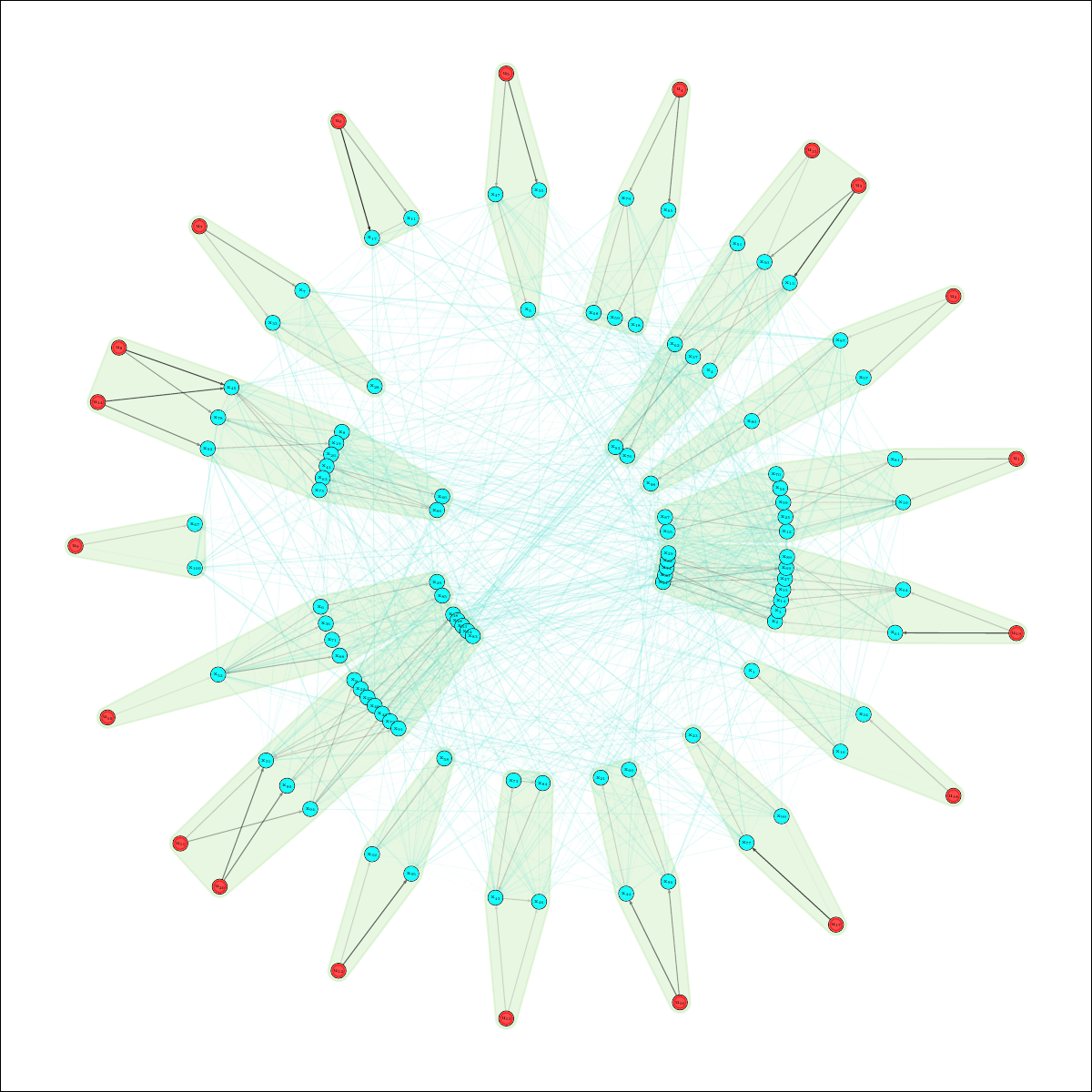}
		\label{fig:Network_States_100_Inputs_20atomic_control_agents}
	}
	\caption{Representations of the equivalent graph of a system with 20 input (in red) and 100 states (in cyan) variables. In (Fig.\ \ref{fig:Network_States_100_Inputs_20network_map}) the concentric degree-based representation, where input nodes are positioned in the most outer part of a circle, while state nodes on concentric rings representing their degree, with the most outer being degree one, and growing of one each ring toward the center. In (Fig.\ \ref{fig:Network_States_100_Inputs_20atomic_control_agents}) the representation of the FSUs after the application of the selection procedure. Each of the 17 FSU is represented by the green area, where dynamic relationships among the variables in the same FSU are represented by with black arrows, and among the FSUs by green arrows. }
\end{figure}


To show the features of Algorithm \ref{alg:atomic}, we consider a dynamical system with 100 states and 20 inputs.  
The system has an equivalent graph represented in Fig.\ \ref{fig:Network_States_100_Inputs_20network_map}. This graph can be interpreted as a snapshot of the dynamical relationships occurring through system variables at a certain time instant. The complexity of such graph makes the expert selection of FSUs, i.e.\ FSUs, a long and challenging task that quickly becomes intractable as the group size grows. In practice, no clear structure can be identified in such graph at first sight.
The application of Algorithm \ref{alg:atomic} reveals the existence of 17 FSUs, as shown in Fig.\ \ref{fig:Network_States_100_Inputs_20atomic_control_agents}. For such FSUamental units, all the couplings with their neighbors are guaranteed to occur through dynamic relationships, and no inputs are shared among them. As shown in Fig.\ \ref{fig:Network_States_100_Inputs_20atomic_control_agents}, this property might require the aggregation of some input nodes together to form single FSUs. It is interesting to notice that the existence of a forward path from an input to a state node inside an FSU implies, for discrete-time systems, that the input will affect that state in a number of time steps equal to the length of the path (an analogous consideration is possible for continuous-time systems in the number of their time derivatives \cite{daoutidis_StructuralEvaluationControl_1992}). Also, certain states, or groups of them, only exhibit a backward effect on the FSU they belong to. For such states, no forward path exists from any input node, making them unreachable states. Still, we assign them to their closest FSU according to their coupling strength.

\subsection{Partitioning a Modular Network} \label{subsec-partitioning_modular}
{ We now show the application of our partitioning strategy to a modular network \cite{newman_ModularityCommunityStructure_2006a}\cite{brandes_ModularityClustering_2008a} of FSUs reported in Fig.\ \ref{fig:Modular_64}. Such network is characterized by a basic pattern of four FSUs strongly coupled, but weakly coupled with similarly configured groups. This type of connection structure is repeated throughout the network at different levels of aggregation. We propose this example to show three important aspects of the proposed approach: 1) the use of the partition index \eqref{eq:partition_index} allows to obtain partitions that aggregate together strongly connected components of the network while minimizing the coupling with the others; 2) the function of the granularity parameter $\alpha$ defined in \eqref{eq:partition_index}, which allows selecting the relevance of the level of aggregation of the components of the partition, thus returning different configurations; 3) the respective advantages and limitations of the optimization-based and algorithmic partitioning approaches. 
	
	In the network in Fig.\ \ref{fig:Modular_64}, a basic pattern of four FSUs that are strictly connected can easily identified by inspection. This pattern repeats at a coarser level if we consider a group of 16 FSUs connected by their corner links. Further, the structure repeats if we consider the entire network of 64 FSUs. 
	To partition this network, we first apply the optimization-based strategy of Section \ref{subsec:optimization-partitioning}. To this end, we need to select a value for the granularity parameter $\alpha$. In our testing, we used a value related to the weights of the edges of the graphs, i.e.\ $\alpha = \left(\frac{\kappa}{w^{\text{min}}}\right)^2$, where $\kappa$ is a scalar free to select by the user, and $w^{\text{min}}$ is the minimum weight of the edges of the graph. By varying $\kappa$, we obtain different $\alpha$ and consequently different partitions. We tried a broad spectrum of values $\alpha$; however, we retrieved only four distinct partitions, which actually cover all the possible aggregations of the FSUs in Fig.\ \ref{fig:Modular_64} at different granularities that we can identify by inspection. Specifically, for $\kappa = 1, 0.1, 0.01, 0.001$ we have $\alpha = 10^6, 10^4, 10^2, 1$. The partitions obtained, which we denoted respectively as partitions $\mathcal{P}_1, \ldots, \mathcal{P}_4$, are represented in the corresponding Figs.\ \ref{fig:Partitioning01}-\ref{fig:Partitioning04}. As expected, the larger the value of $\alpha$ is, the higher the relevance of the size of the resulting CSU in the partition. If the value of $\alpha$ becomes sufficiently large, then the resulting partition is constituted by all individual FSUs. Conversely, if $\alpha$ is small enough, the network partition is one individual CSU aggregating all the FSUs together. The other two selections of $\alpha$ return groups of 4 or 16 FSUs, representing the other two optimal aggregations of FSUs.  
	When applying algorithmic partitioning to the same problem, we have retrieved two of the four optimal results obtained with optimization-based partitioning.  
	In particular, using the metric \eqref{eq:metric_greedy}, with a choice of $\alpha = 25$ we get the partition in Fig.\ \ref{fig:Partitioning01}, where every FSU is allocated into a distinct CSU. By reducing $\alpha$ to 1, we get the partition in Fig. \ref{fig:Partitioning01}, with 16 CSUs containing 4 FSUs each. For $\alpha$ smaller than 1, we obtained various combinations of the basic groups of four FSUs, but never the exact partition in Fig.\ \ref{fig:Partitioning03}
	
	The last aspect to consider in this example is the computational cost of the two approaches. We solved the optimization-based problem using the Gurobi Optimizer software \cite{GurobiOptimization}, running with parallel computing on a CPU Intel XEON E5-6248R with 24 cores at 3GHz. In the worst case scenario, that is for the partition in Fig. \ref{fig:Partitioning01} with 64 individual FSUs, we let the optimizer run for 3 hours and 34 minutes before prematurely stopping it while reaching a solution gap of $7.07 \%$. For the other cases, the solver could get the solution faster, with computation times as small as 15 minutes. The algorithmic approach is executed as a single-threaded task, given its sequential nature. The solution has been retrieved consistently in less than 600 seconds, with a clear computational advantage compared to the optimization-based approach. 
 

\begin{figure*}[t]
	\centering
	\subfloat[$\textrm{Partition}$ $\mathcal{P}_1$]{
		\includegraphics[width=0.23\linewidth]{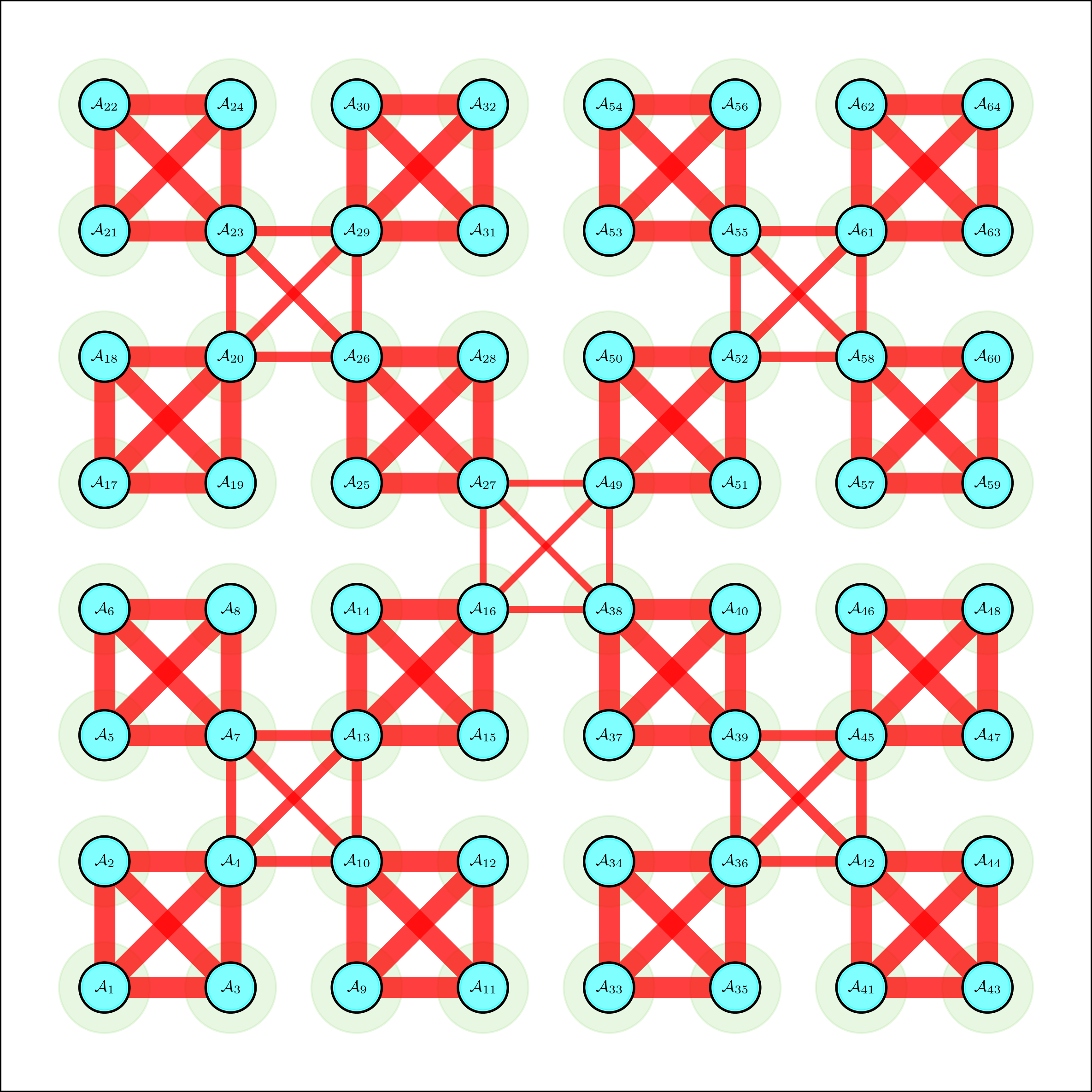}
		\label{fig:Partitioning01}
	}
	\hfill
	\subfloat[$\textrm{Partition}$ $\mathcal{P}_2$]{
		\includegraphics[width=0.23\linewidth]{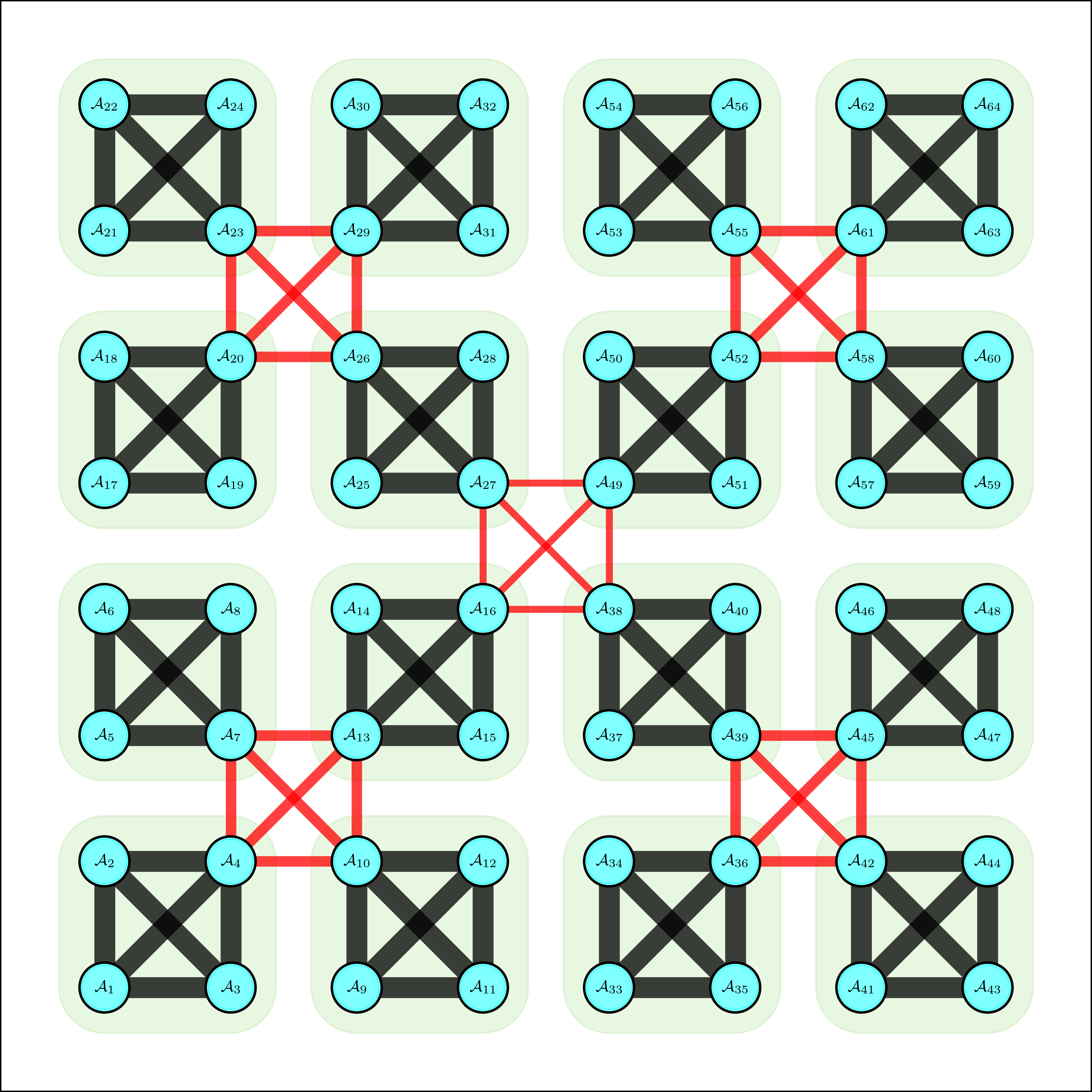}
		\label{fig:Partitioning02}
	}
	\hfill
	\subfloat[$\textrm{Partition}$ $\mathcal{P}_3$]{
		\includegraphics[width=0.23\linewidth]{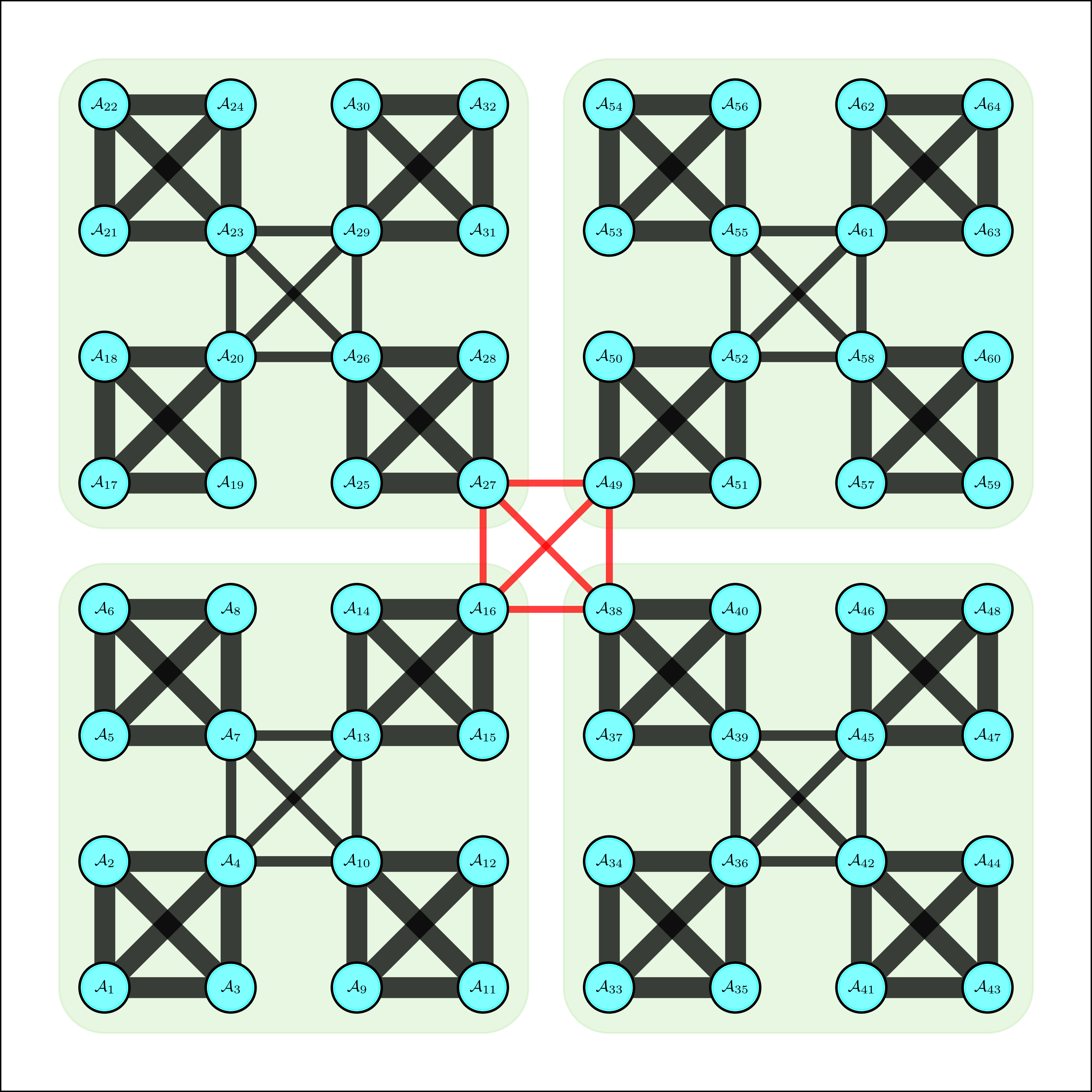}
		\label{fig:Partitioning03}
	}
	\hfill
	\subfloat[$\textrm{Partition}$ $\mathcal{P}_4$]{
		\includegraphics[width=0.23\linewidth]{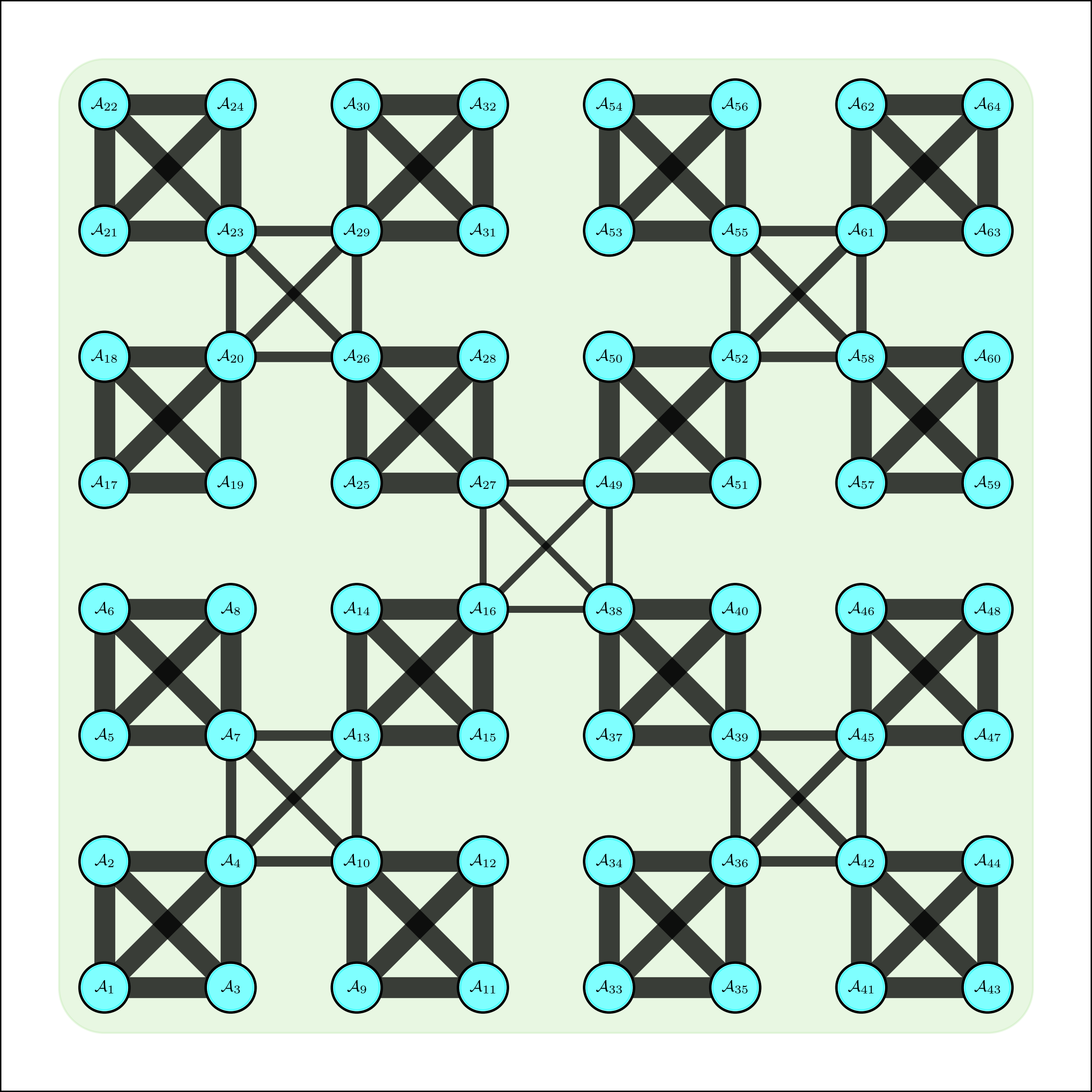}
		\label{fig:Partitioning04}
	}
	\caption{Graphs related to the partitioning of a modular network with 64 FSUs. The strength of the connection is represented by the thickness of the links. Partitions $\mathcal{P}_1$-$\mathcal{P}_4$ are obtained for values $\alpha = 10^6, 10^4, 10^2, 1$ respectively. 
	}
	\label{fig:Modular_64}
\end{figure*}

\section{Case Studies: The Role of Partitioning in Distributed Predictive Control}\label{sec:example-case-study} 
{In this section, we illustrate the applicability of the partitioning techniques developed throughout the paper to the distributed MPC (DMPC) control of networks. The DMPC architecture we will use for this aim is based on the alternating direction method of multipliers (ADMM) \cite{boyd_DistributedOptimizationStatistical_2010a}, which is a well-established framework \cite{summers_DistributedModelPredictive_2012a,rostami_ADMMbasedDistributedModel_2017b}. We will show how our partitioning technique can improve the performance of DMPC-ADMM, both in terms of stage and computation cost. This will be achieved first by applying DMPC-ADMM on the modular network of 64 FSUs presented in Sec.\ \ref{subsec:ADMM-linear} and assuming that each FSU is a linear system. 
Then in Sec.\ \ref{subsec:ADMM-hybrid}, for the same network, we assume that each FSU is a piecewise affine (PWA) system \cite{sontag_NonlinearRegulationPiecewise_1981,heemels_EquivalenceHybridDynamical_2001,tabuada_VerificationControlHybrid_2009}, for which, at each ADMM iteration, we deploy the MPC technique formulated in \cite{bemporad_ControlSystemsIntegrating_1999}. The performance of DMPC-ADMM improved w.r.t. the conventional formulation, where all FSUs are considered independently, while retaining the same pattern in computation times and costs obtained for linear systems. 
Finally, in Sec.\ \ref{sec:case-study-random}, we propose a case study of a random network of 50 FSUs for which we will compare optimization-based and algorithmic partitioning. 

\subsection{DMPC-ADMM for a modular network of linear FSUs}
\label{subsec:ADMM-linear}

\begin{figure}[t]
	\centering
	\subfloat[]{
		\includegraphics[width=.85\linewidth]{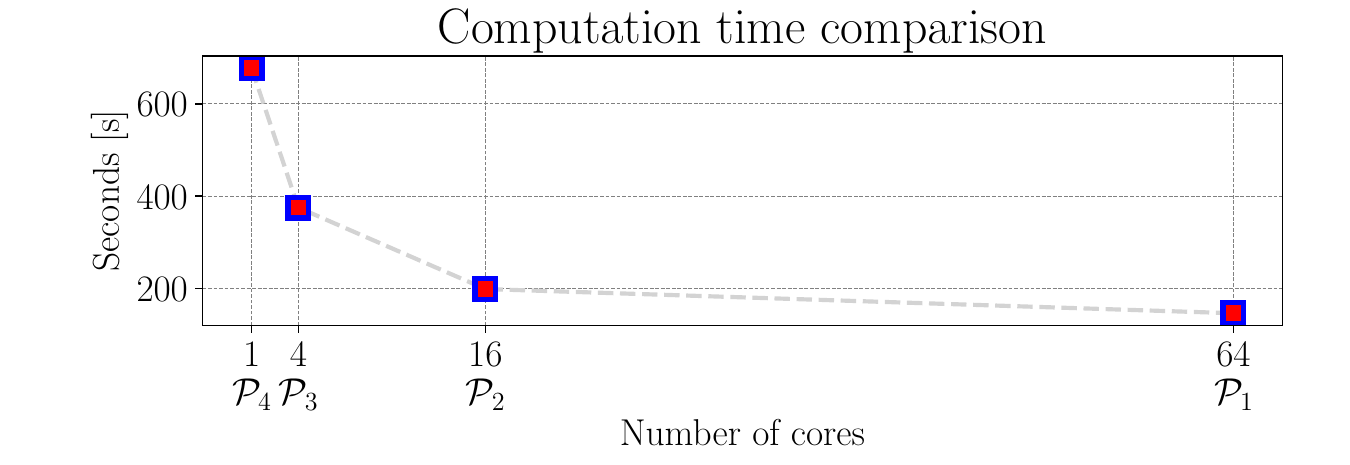}
		\label{fig:computationtimes-linear}
	}
	\hfill
	\subfloat[]{
		\includegraphics[width=.85\linewidth]{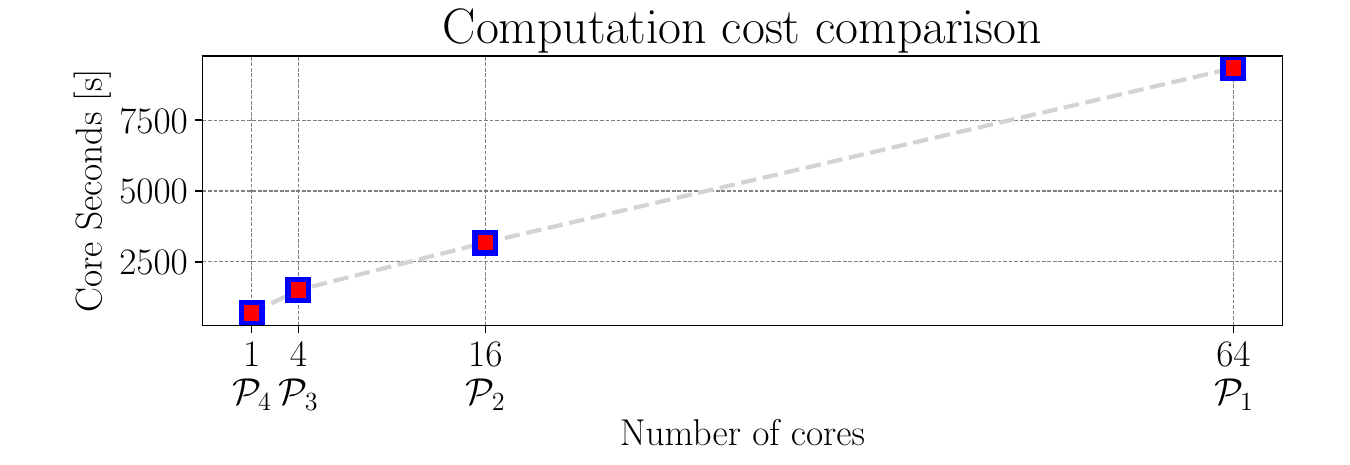}
		\label{fig:computationcost-linear}
	}
	\caption{Comparison of the computation times (Fig.\ \ref{fig:computationtimes-linear}), and computation costs (Fig.\ \ref{fig:computationcost-linear}) for the different partitions. For the former, the gray trend line highlights a diminishing return in computation speed as the number of cores increases, while for the latter the almost linear relation between the number of cores and their associated computation cost.}
\end{figure}
We consider each FSU a linear systems with matrices $A$ and $B$ such that the dynamics is stable and the network is controllable. We consider $A_{(i,i)} = 0.5$, $B_{(i,i)} = 1$, and values $A_{(i,j)} = 0.1$ or $0.01$ for strong and weak couplings respectively, as represented by the thickness of the edges in Fig.\ \ref{fig:Modular_64}. For this system, we consider a prediction horizon of $30$ in time steps. Centralized MPC \cite{scattolini_ArchitecturesDistributedHierarchical_2009, mayne_ConstrainedModelPredictive_2000a} is applied to partition $\mathcal{P}_4$, while DMPC-ADMM is used for the others, according to the formulations \cite{summers_DistributedModelPredictive_2012a,rostami_ADMMbasedDistributedModel_2017b}, and selecting parameters $\epsilon = 0.001$ (consensus threshold), $\rho = 0.01$ (penalty parameter). The control objective is to track a sinusoidal reference of unit amplitude and with frequency $w = 2 \pi/15$, with bounds $u^{[i]} \in [-0.5;0.5]$, $x^{[j]} \in [-0.9;0.9]$ $\forall i,j$. We assume a zero initial condition. Given the maturity of the MPC framework, it is not surprising that the controller achieves excellent performance for CMPC, and without any substantial variation for the DMPC-ADMM alternatives. Control objectives are met easily; thus, we omit them, but they are accessible in the repository \cite{riccardi_CodeUnderlyingPublication_2025}. Instead, we are interested in evaluating the results through the performance metrics reported in Tab.\ \ref{tab:results-linear}, where in the first column, each row corresponds to control applied using one of the partitions derived in Sec.\ \ref{subsec-partitioning_modular}. In the second column, we report the number of CSUs in the partition, corresponding to the number of computing cores necessary to deploy a fully distributed control architecture. Next, we have the value of the cumulative stage cost at the final step of the simulation. We see that the performance deteriorates minimally with a growing number of CSUs, not significantly enough to prefer the selection of a strategy w.r.t.\ another, but also indicating that each distributed approach based on the partitions obtained is effective in controlling the network. We also evaluate the different approaches through parallel computation time, and computational cost, and for the latter we use core seconds, i.e.\ the total amount of seconds the computing cores operate to deploy a fully distributed strategy. We compute core seconds by picking at each ADMM iteration the computation time of the slowest core and multiplying it by the number of cores, then summing all these values throughout the simulation. 
If we consider the parallel computation time in the fourth column, we see that the higher the number of CSUs is, the faster the DMPC-ADMM strategy is. To put the numbers in perspective, in the fifth column we consider the time required by fastest architecture $\mathcal{P}_1$ as a reference unit, and compute the ratio with the other approaches. It follows that the partition $\mathcal{P}_2$ is only $1.35$ times slower than $\mathcal{P}_4$, while the centralized MPC $4.66$ times. Therefore, even if $\mathcal{P}_1$ is much faster than CMPC, there is a diminishing return w.r.t.\ $\mathcal{P}_2$. This fact can also be observed in Fig.\ \ref{fig:computationtimes-linear}, where we report the computation times. Finally, we compare the computational costs in the last two columns of Tab.\ \ref{tab:results-linear}. 
Accordingly, considering CMPC in $\mathcal{P}_1$ the least expensive strategy, the computation cost increase trend is almost linear in the number of cores.
This fact can be seen in Fig.\ \ref{fig:computationcost-linear}, where we plot the computation costs. 
The analysis of the results shows that while the computation cost increases almost linearly with the number of CSUs, the computation time decreases exponentially, highlighting a pattern of diminishing returns. Accordingly, it is comparatively more expensive to increase the number of CSUs from the computation perspective.
In this case, partition $\mathcal{P}_2$ is $1.35$ times slower than $\mathcal{P}_1$ while being $2.94$ cheaper, and requiring 48 computing cores less to be executed. Considerations of this nature can vastly differentiate the selection of different approaches in practice. 
While DMPC-ADMM performance as a function of a selected partition is not an issue for linear systems, computation times can be. 
As an example, we considered a partition manually selected to maximize the inter-CSU coupling, accessible at \cite{riccardi_CodeUnderlyingPublication_2025}. 
As a result of our testing of the same control problem of the other cases, even if the ADMM approach can converge, we get an estimated parallel computation time of $160$ hours for the same simulation compared to less than 400 seconds for the other approaches, a clear demonstration of the value of proper selection of partitions.
%

\begin{table*}
	\centering
		\caption{ Comparison of DMPC-ADMM performance applied to a network of linear systems for different partitioning strategies}
	\begin{tabular}{c | c | c |c | c | c | c }
		\hline 
		\textbf{Partition}  &   \textbf{Cores}  & \textbf{Cost function value\ } & \textbf{Computation time [$s$]}  & \textbf{Computation time  ratio} &  \textbf{Core seconds [$s$]} & \textbf{Core seconds ratio}\\
		\hline 
		$\mathcal{P}_{4}$& 1 & 465.4703 &680.88 & 4.6619 & 680.88 & 1.0000 \\ 
		$\mathcal{P}_{3}$& 4 & 465.4709 &369.56 & 2.5303 & 1478.23& 2.1710 \\ 
		$\mathcal{P}_{2}$& 16 & 465.5153 &198.52 & 1.3592 & 3176.39& 4.6651 \\ 
		$\mathcal{P}_{1}$&  64& 465.5158 &146.05 & 1.0000 & 9347.37& 13.7283 \\ 
		\hline
	\end{tabular}
	\label{tab:results-linear}
\end{table*}
\begin{table*}
	\centering
	\caption{ Comparison of DMPC-ADMM performance applied to a network of hybrid systems for different partitioning strategies}
	\begin{tabular}{c | c | c |c | c | c | c }
		\hline 
		\textbf{Partition}  &   \textbf{Cores}  & \textbf{Cost function value\ } & \textbf{Computation time [$s$]}  & \textbf{Computation time  ratio} &  \textbf{Core seconds [$s$]} & \textbf{Core seconds ratio}\\
		\hline 
		$\mathcal{P}_{4}$& 1 & 24949.16 & 849.57 & 2.7691 & 849.57 & 1.0000 \\ 
		$\mathcal{P}_{3}$& 4 & 24945.76 & 632.35 & 2.0611 & 2529.41& 2.9772 \\ 
		$\mathcal{P}_{2}$& 16 & 24945.77 & 488.05 & 1.5907 & 7808.80 & 9.1914 \\ 
		$\mathcal{P}_{1}$&  64& 24946.04 & 306.80 & 1.0000 & 19635.58 & 23.1123 \\ 
		\hline
	\end{tabular}
	\label{tab:results-hybrid}
\end{table*}

\subsection{DMPC-ADMM for a modular network of hybrid FSUs}
\label{subsec:ADMM-hybrid}
{
We now assume that each CSU in the network Fig.\ \ref{fig:Modular_64} is a hybrid system, specifically a PWA system of the form:
\begin{align} \label{eq:hybrid}
	& x^{[i]}(k+1) = 0.5 x^{[i]}(k) + u^{[i]}(k) + \nonumber \\ & + \sum_{j\in \mathcal{N}_i}  A_{(i,j)}x^{[j]}(k) \qquad \text{if} \quad  x^{[i]}(k) \geq 0 \\
	& x^{[i]}(k+1) = -0.5 x^{[i]}(k) + u^{[i]}(k) + \nonumber  \\ & + \sum_{j\in \mathcal{N}_i}  A_{(i,j)}x^{[j]}(k) \qquad \text{if} \quad  x^{[i]}(k) < 0 
\end{align}
{ where, as in the previous case, $A_{(i,j)} = 0.1$ or $0.01$ as represented in Fig.\ \ref{fig:Modular_64}. Also, we retain the bounds $u^{[i]} \in [-0.5;0.5]$, $x^{[j]} \in [-0.9;0.9]$ $\forall i,j$. The objective is to track a sinusoidal reference with unit amplitude and frequency $w = 2\pi /20$ from a zero initial condition. Here, we reduce the prediction horizon to 6 time steps to reduce the complexity of the problems.
In this case we first convert the system into Mixed Logical Dynamical (MLD) form \cite{heemels_EquivalenceHybridDynamical_2001}, and then we apply MPC as described in \cite{bemporad_ControlSystemsIntegrating_1999} for centralized control, and use the same procedure in combination with the DMPC-ADMM strategy \cite{boyd_DistributedOptimizationStatistical_2010a,summers_DistributedModelPredictive_2012a,rostami_ADMMbasedDistributedModel_2017b}.
In this case, at each step we are required to solve a Mixed Integer Quadratic Program (MIQP) \cite{conforti_IntegerProgramming_2014} to obtain the control action. Even if for nonconvex and possibly nonsmooth optimization it is not always guaranteed that the ADMM algorithm will converge \cite{wang_GlobalConvergenceADMM_2019}, similar approaches have been deployed successfully in literature \cite{luan_DecompositionDistributedOptimization_2020a}. We use the same hardware specified in Sec.\ \ref{subsec:ADMM-linear}. We report the results of the simulations in Figs.\ \ref{fig:inputPWA}, \ref{fig:statePWA} and in Tab.\ \ref{tab:results-hybrid}. The results show a marginally improved cumulative stage cost. Also, the patterns found in the previous} section for linear systems regarding computation times and costs repeat here. 
}


\begin{figure}[t]
	\centering
	\subfloat[]{
		\includegraphics[width=.85\linewidth]{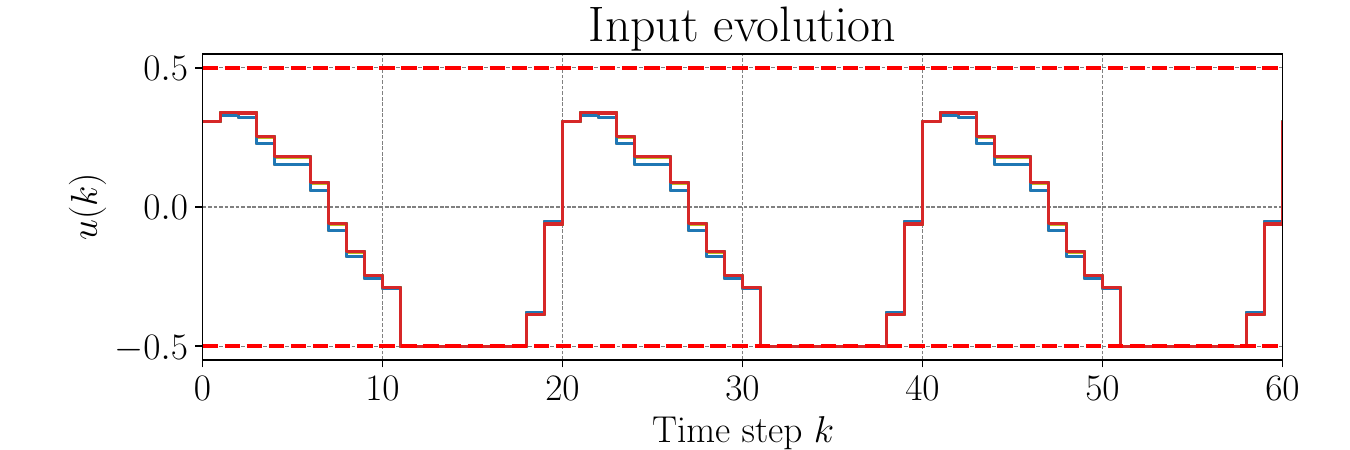}
		\label{fig:inputPWA}
	}
	\hfill
	\subfloat[]{
		\includegraphics[width=.85\linewidth]{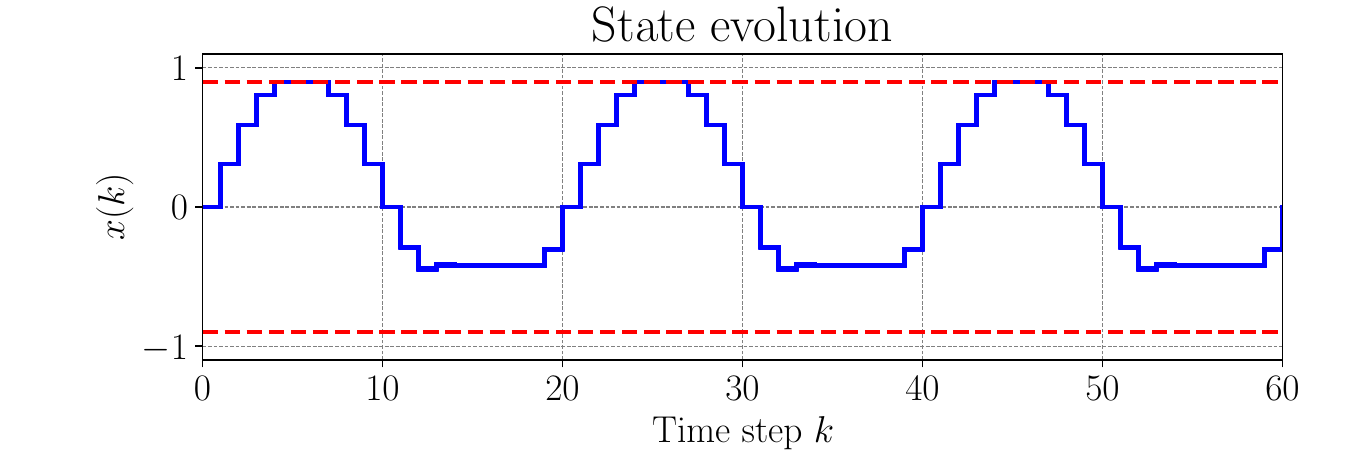}
		\label{fig:statePWA}
	}
	\caption{Evolution of the inputs (Fig.\ \ref{fig:inputPWA}) and of the states (Fig.\ \ref{fig:statePWA}) for the network of PWA FSUs. For the former, all signals are approximately the same and they respect the constraints.For the latter, when the value of a state crosses the threshold at zero, we have a change of dynamics, which is reflected in a different behavior. Accordingly, the input signal changes too, but reaches saturation before the state is able to reproduce the reference. 
	}
\end{figure}

%

\subsection{DMPC-ADMM for a random network of 50 hybrid systems}\label{sec:case-study-random}

{\CB We consider a network with 50 FSUs with hybrid dynamics \eqref{eq:hybrid} that are connected through a randomly generated topology, reproducible through the software in \cite{riccardi_CodeUnderlyingPublication_2025}. The network is in Fig.\ \ref{fig:Random-50-topology}. We apply both optimization-based and algorithmic partitioning, obtaining different configurations of CSUs for which we will test CMPC and DMPC-ADMM control strategies as above. All the partitions obtained are reported in Fig.\ \ref{fig:Partition_01_Random_Alg_Opt} - \ref{fig:Partition_02_Random_Alg_Opt}. For both partitioning approaches, the endpoints of the range of $\alpha$ return two partitions that we denote by $\mathcal{P}^{\text{ADMM}}$ (with 50 individual CSUs) and $\mathcal{P}^{\text{CMPC}}$ (with one CSU for the whole network). The superscript notation refers to the fact that we deploy the conventional DMPC-ADMM strategy for the former partition, and for the latter CMPC. 
Selecting the weights as described in Sec.\ \ref{subsec-partitioning_modular}, for optimization-based partitioning we obtain five partitions $\mathcal{P}^{\text{Opt}}_i$, $i = 0,\ldots,4$, for $\alpha$ varying, which are in Figs.\ \ref{fig:Partition_01_Random_Alg_Opt}, \ref{fig:Partition_01_Random_Opt}, \ref{fig:Partition_02_Random_Opt}, \ref{fig:Partition_03_Random_Opt}, \ref{fig:Partition_02_Random_Alg_Opt}. For algorithmic partitioning we have  the six partitions $\mathcal{P}^{\text{Alg}}_i$, $i = 0,\ldots,5$, shown in Figs.\ \ref{fig:Partition_01_Random_Alg_Opt}, \ref{fig:Partition_02_Random_Alg}, \ref{fig:Partition_03_Random_Alg}, \ref{fig:Partition_04_Random_Alg}, \ref{fig:Partition_05_Random_Alg}, \ref{fig:Partition_02_Random_Alg_Opt}. Respective weights are indicated below the figures, where the CSUs are represented by the colored areas collecting together FSUs.
Given such partitions, we apply the same DMPC strategy deployed in Sec.\ \ref{subsec:ADMM-hybrid}. This time, we assume a different initial condition for each CSU and a different sinusoidal reference to track, both randomly generated. Data resulting from the simulations and scripts to reproduce them can be accessed at \cite{riccardi_CodeUnderlyingPublication_2025}. Reporting the graphs of each simulation is prohibitive. Therefore, here we propose in Figs.\ \ref{fig:Error}, \ref{fig:MaxState} respectively the evolution of the maximum error and the maximum absolute value of the states across all FSUs for each partition, showing that the controller can achieve the objectives. Results for all partitions are in Tab.\ \ref{tab:results-random}. Here we notice that, while for $\mathcal{P}^{\text{CMPC}}$ we still get the best performance at the expense of the highest computation time, for $\mathcal{P}^{\text{ADMM}}$ we have a different situation w.r.t.\ what we found previously in Sec.\ \ref{subsec:ADMM-hybrid}. For $\mathcal{P}^{\text{ADMM}}$, there is now a substantial difference in performance w.r.t.\ the other partitions, which poses the question of the impact of topology in network control.  
The best-performing strategies are now in order $\mathcal{P}^{\text{Alg}}_3$ and $\mathcal{P}^{\text{Opt}}_2$, with a difference in the cost value of $0.16\%$, $0.24\%$ respectively, but a significantly lower computation time for $\mathcal{P}^{\text{Opt}}_2$, which can be considered as the most suitable partition in this case. To further compare the results, Figs.\ \ref{fig:computationtimes-random}, \ref{fig:computationcost-random} show the graphs relating computation time and cost to the number of CSUs in the network, and consequently of computing cores necessary for the implementation of the distributed architecture, as done in Sec.\ \ref{subsec:ADMM-hybrid}. Here, we see that for optimization-based partitions, we still get the same trend in the scaling of the computation w.r.t.\ the number of CSUs as in Fig.\ \ref{fig:computationtimes-linear}, and an almost linear scaling in the computation cost as in Fig.\ \ref{fig:computationcost-linear}. However, for algorithmic partitioning, we get two outliners with $\mathcal{P}^{\text{AOpt}}_4$ and $\mathcal{P}^{\text{AOpt}}_5$, both relatively high in the computation time and cost, but different in the overall performance. This analysis suggests that partitions obtained with the optimization-based approach tend to provide more reliable performances according to these evaluation criteria. However, algorithmic partition remains a valid alternative, often providing good results with low computational cost.

}

\begin{figure}[t]
	\centering
	\subfloat[]{
		\includegraphics[width=.85\linewidth]{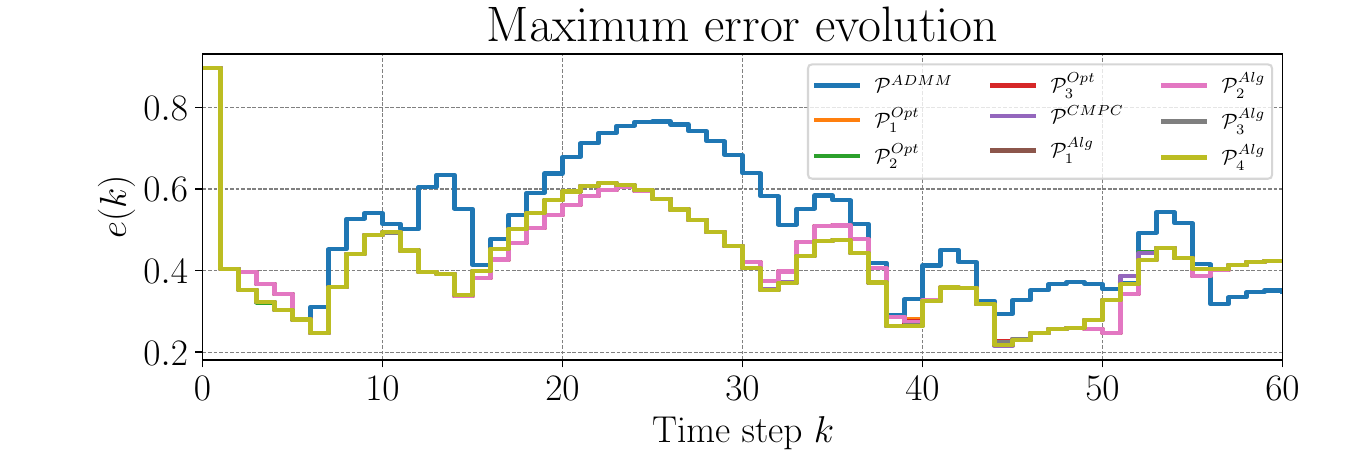}
		\label{fig:Error}
	}
	\hfill
	\subfloat[]{
		\includegraphics[width=.85\linewidth]{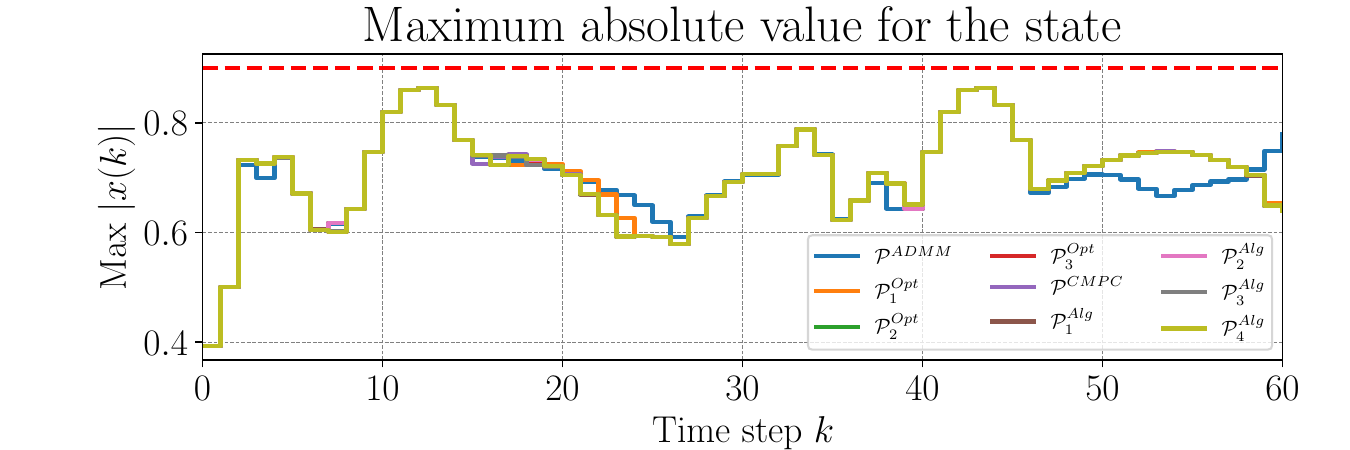}
		\label{fig:MaxState}
	}
	\caption{Evolution of the maximum error (Fig.\ \ref{fig:Error}) and absolute value (Fig.\ \ref{fig:MaxState}) across all CSUs for different partitions.
	}
\end{figure}

%

\begin{figure*}[t]
	\centering
	\subfloat[$\mathcal{P}^{\textrm{ADMM}}\equiv\mathcal{P}^{\textrm{Opt}}_0\equiv\mathcal{P}^{\textrm{Alg}}_0$]{
		\includegraphics[width=0.24\linewidth]{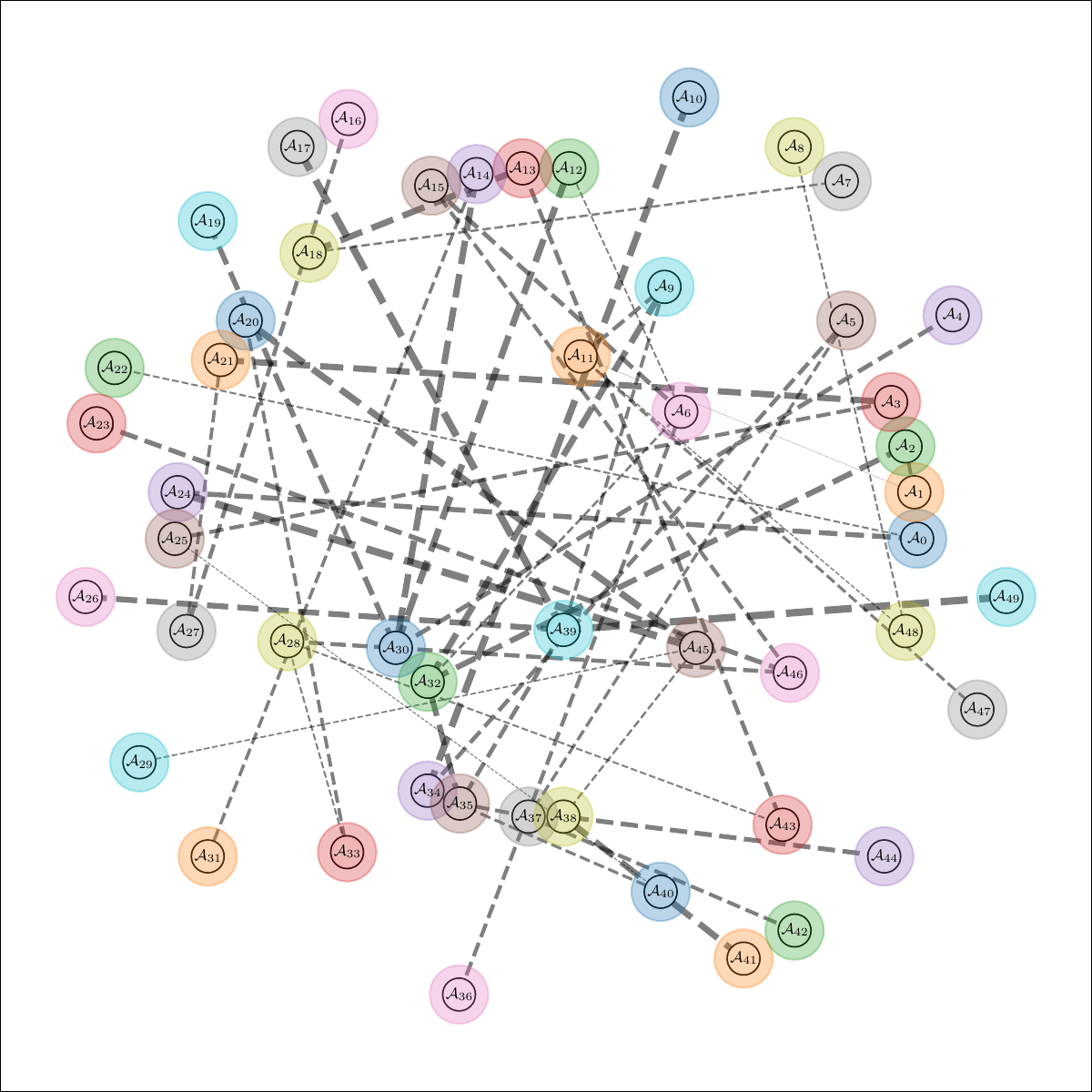}
		\label{fig:Partition_01_Random_Alg_Opt}
	}
	\hfill
	\subfloat[$\mathcal{P}^{\textrm{Opt}}_1$]{
		\includegraphics[width=0.24\linewidth]{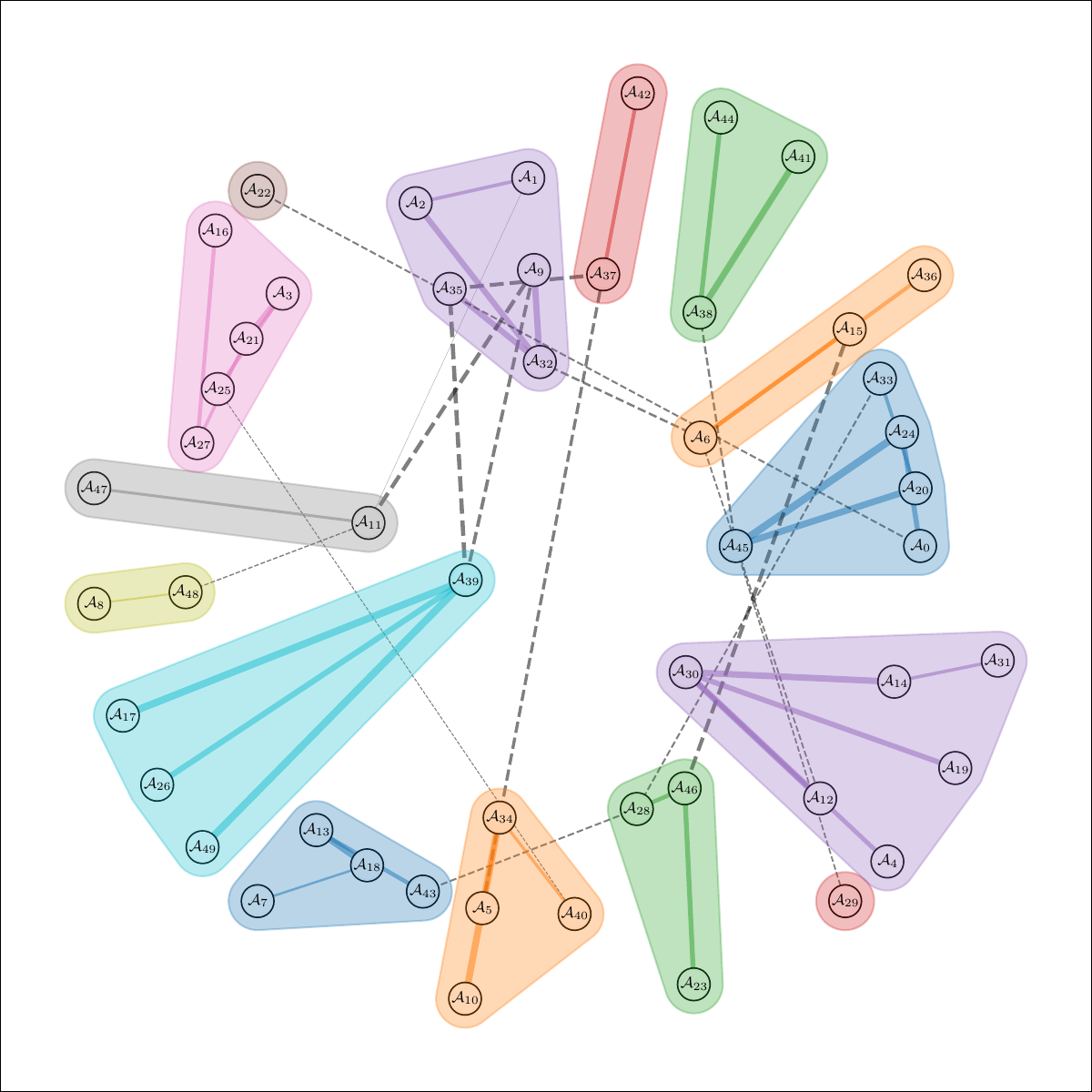}
		\label{fig:Partition_01_Random_Opt}
	}
	\hfill
	\subfloat[$\mathcal{P}^{\textrm{Alg}}_1$]{
		\includegraphics[width=0.24\linewidth]{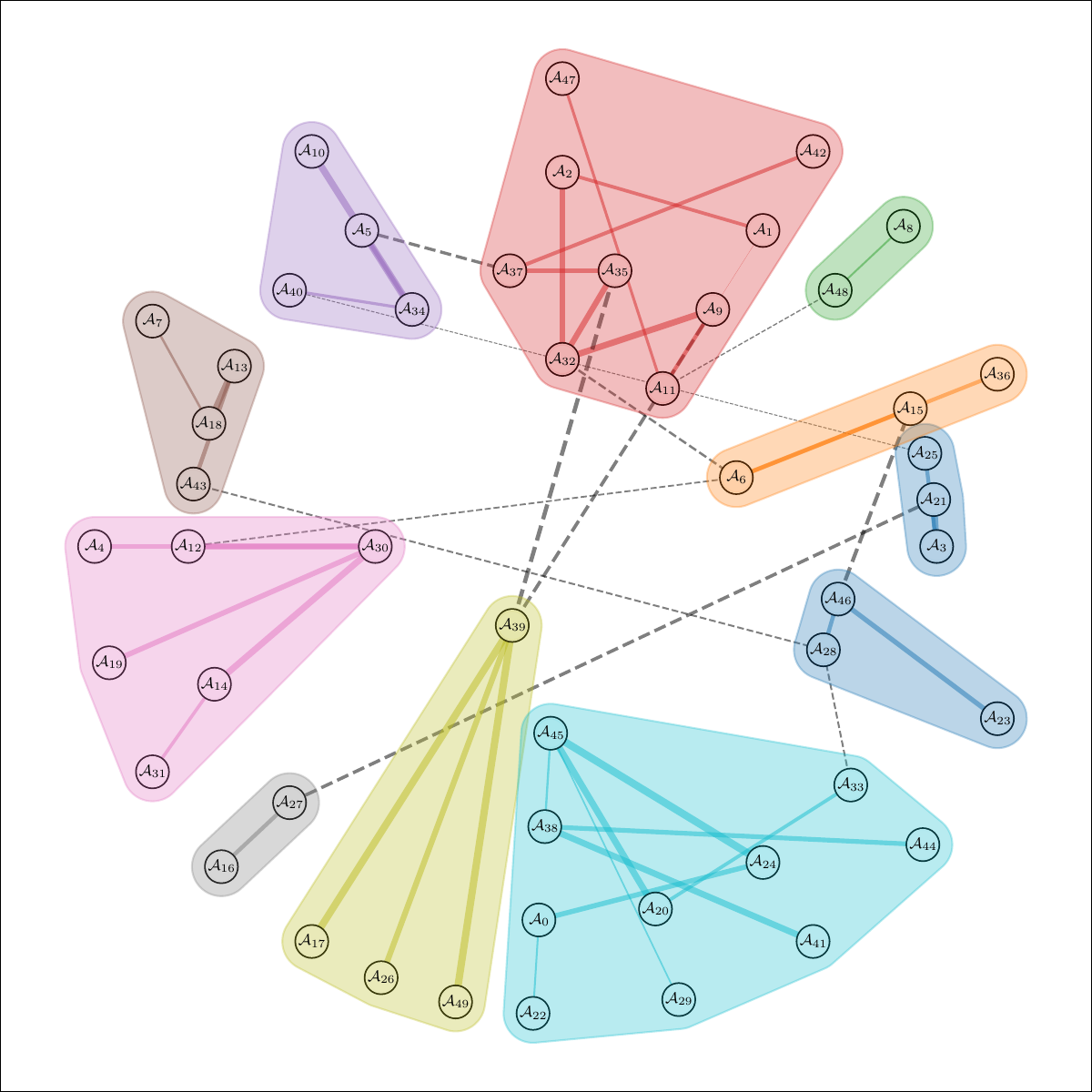}
		\label{fig:Partition_02_Random_Alg}
	}
	\hfill
	\subfloat[$\mathcal{P}^{\textrm{Alg}}_2$]{
		\includegraphics[width=0.24\linewidth]{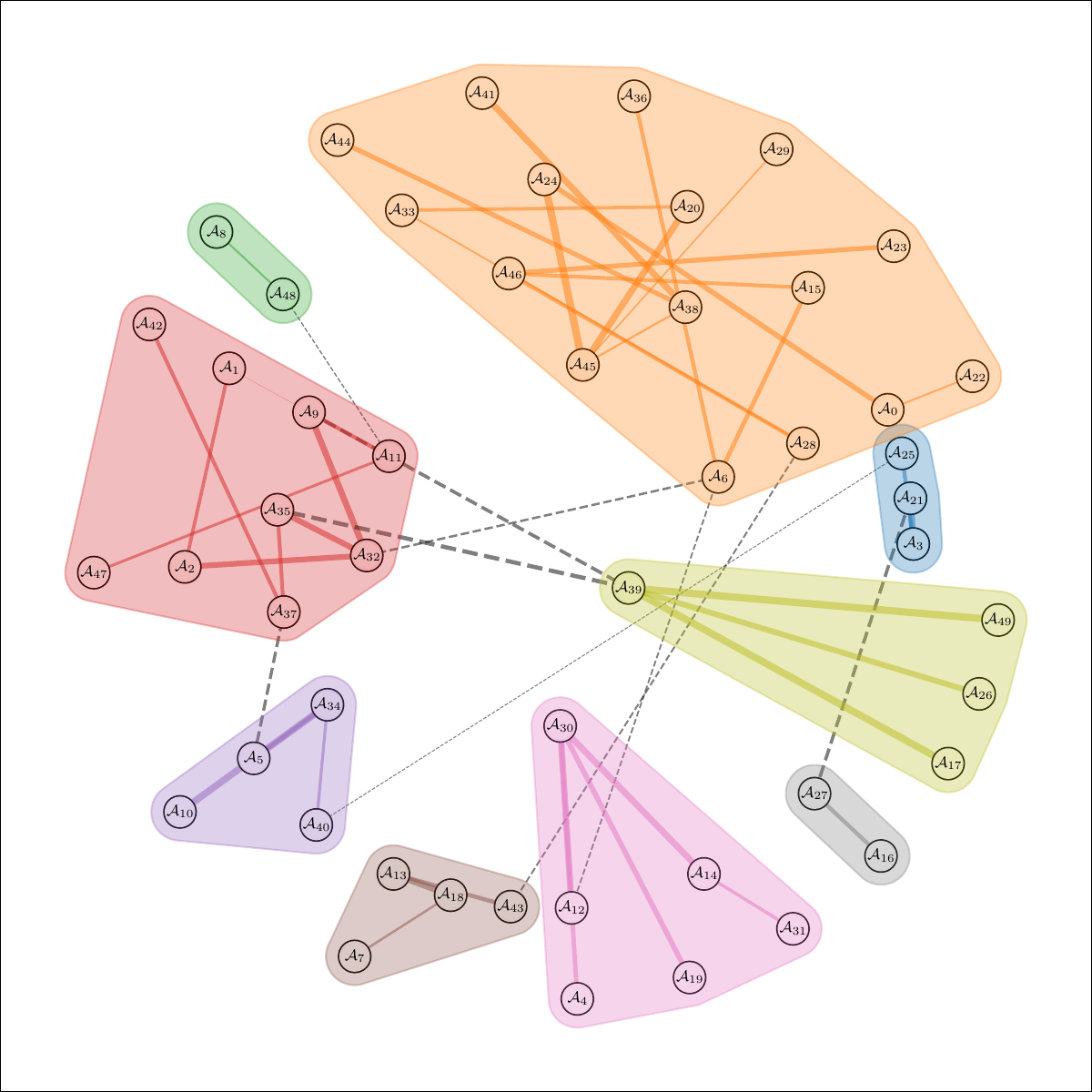}
		\label{fig:Partition_03_Random_Alg}
	}
	\hfill
	\subfloat[$\mathcal{P}^{\textrm{Opt}}_2$]{
		\includegraphics[width=0.24\linewidth]{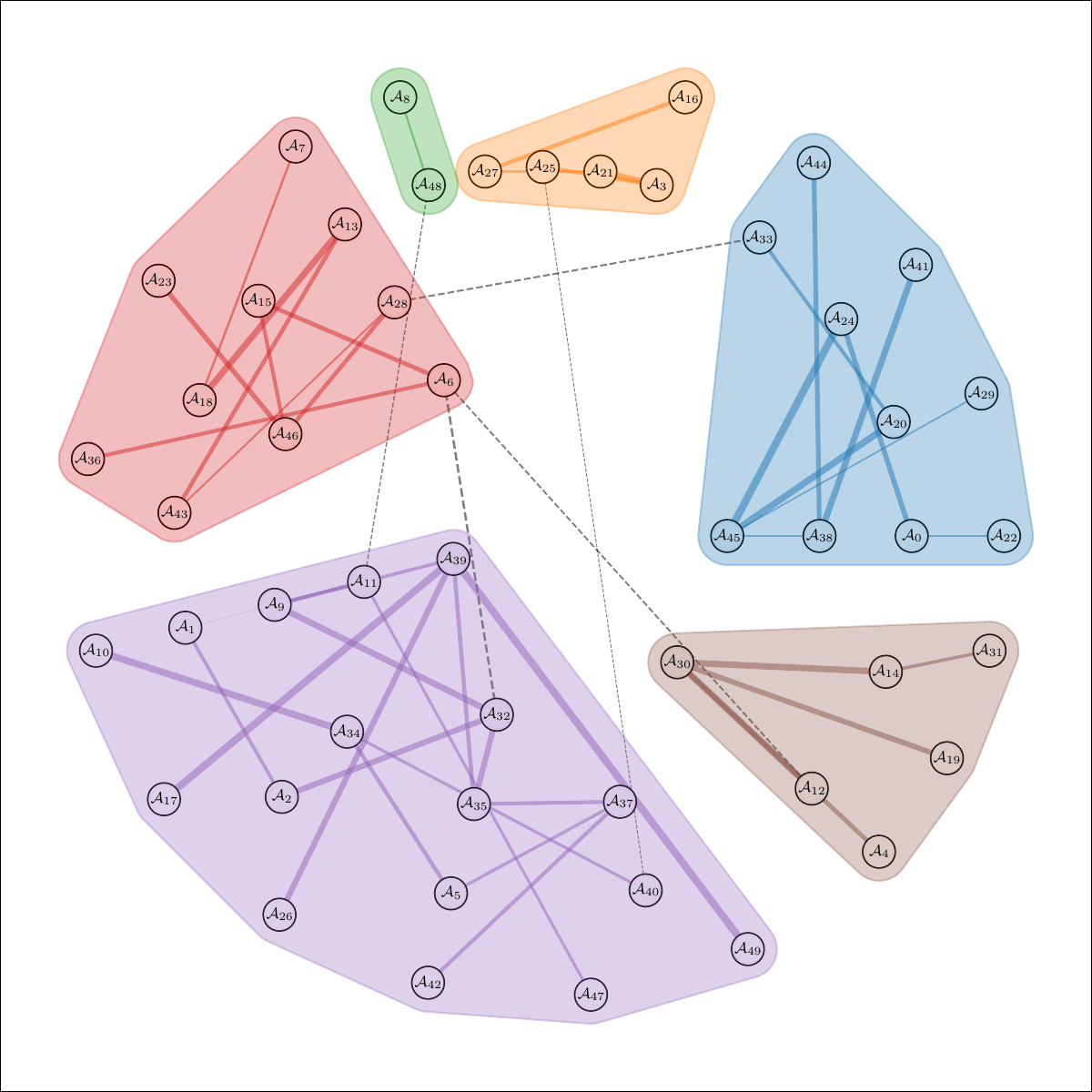}
		\label{fig:Partition_02_Random_Opt}
	}
	\hfill
	\subfloat[$\mathcal{P}^{\textrm{Alg}}_3$]{
		\includegraphics[width=0.24\linewidth]{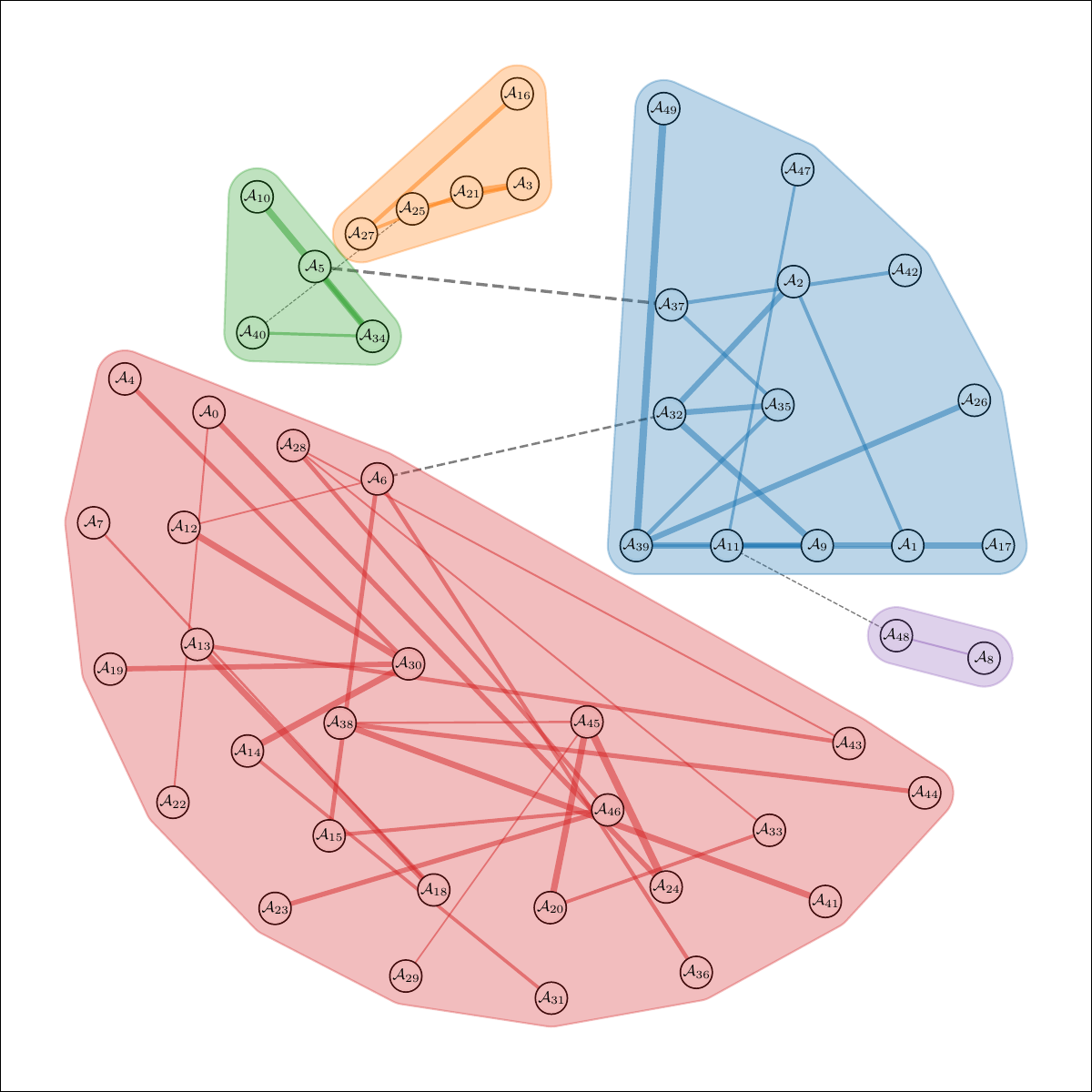}
		\label{fig:Partition_04_Random_Alg}
	}
	\hfill
	\subfloat[$\mathcal{P}^{\textrm{Alg}}_4$]{
		\includegraphics[width=0.24\linewidth]{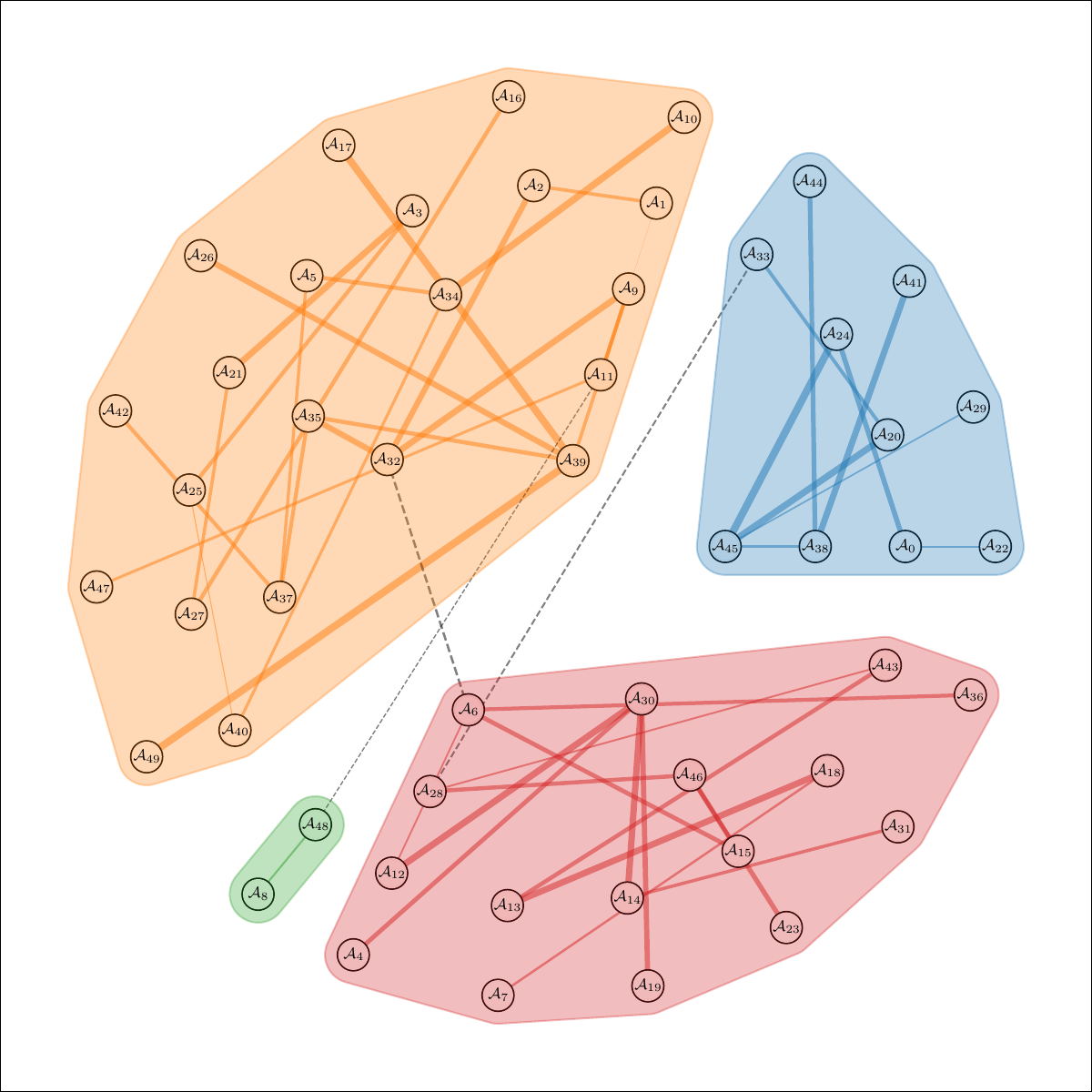}
		\label{fig:Partition_05_Random_Alg}
	}
	\hfill
	\subfloat[$\mathcal{P}^{\textrm{Opt}}_3$]{
		\includegraphics[width=0.24\linewidth]{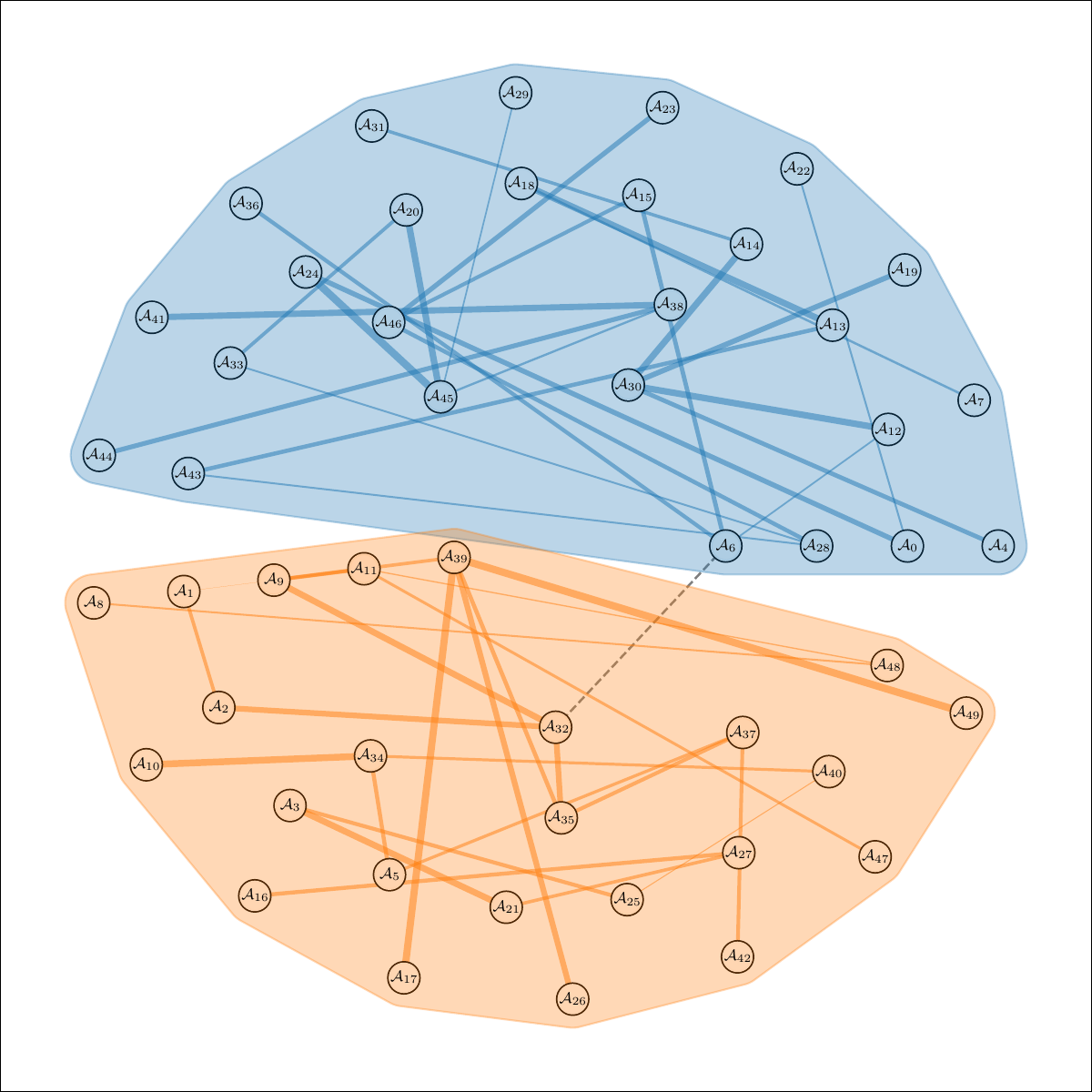}
		\label{fig:Partition_03_Random_Opt}
	}
	\hfill
	\subfloat[$\mathcal{P}^{\textrm{CMPC}}\equiv\mathcal{P}^{\textrm{Opt}}_4\equiv\mathcal{P}^{\textrm{Alg}}_5$]{
		\includegraphics[width=0.24\linewidth]{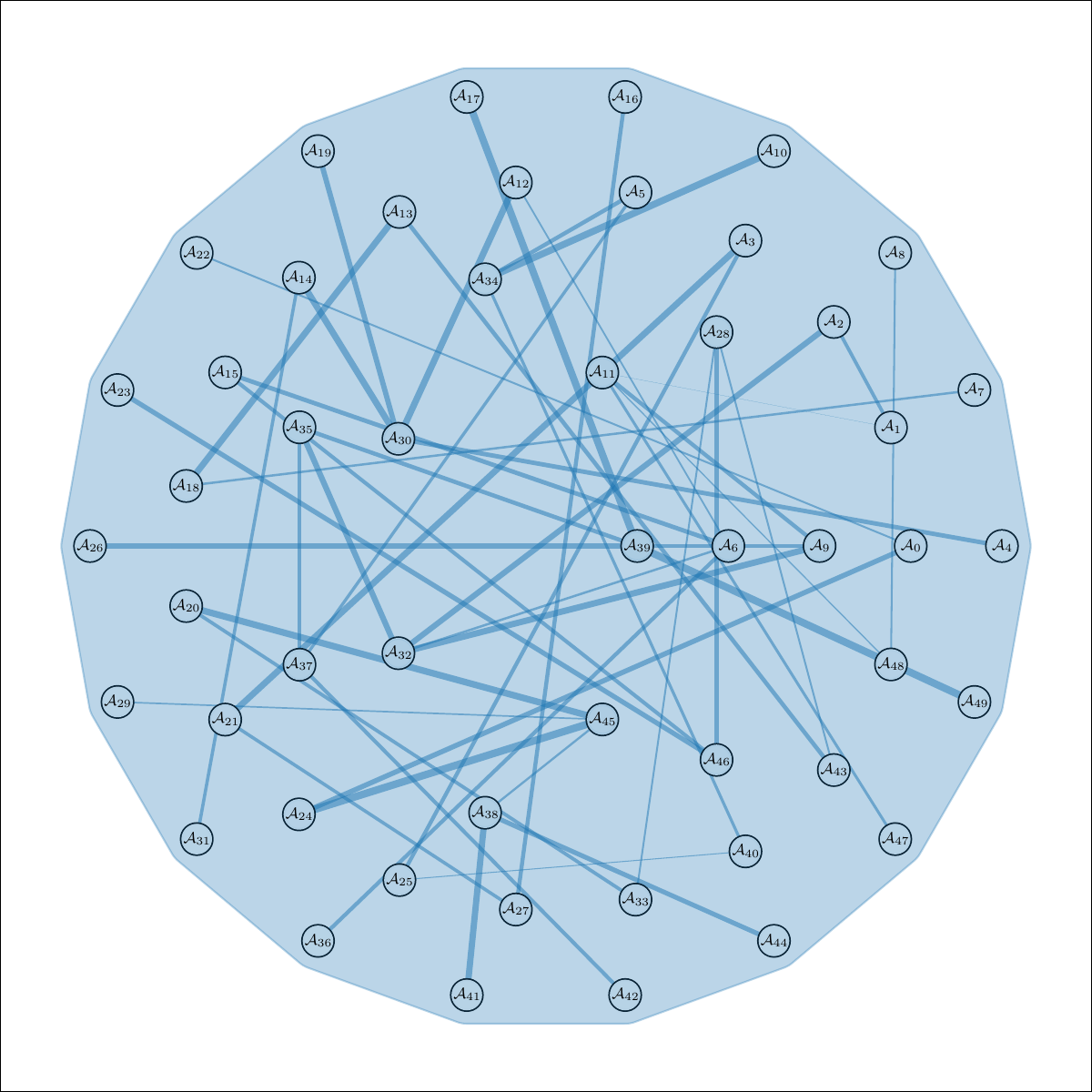}
		\label{fig:Partition_02_Random_Alg_Opt}
	}
	\caption{Partitions of a random network with 50 FSUs. In the following, for each partition it is indicated the number of resulting CSUs, and the granularity parameter $\alpha$ used in optimization-based and algorithmic partitioning. $\mathcal{P}^{\textrm{ADMM}}\equiv\mathcal{P}^{\textrm{Opt}}_0\equiv\mathcal{P}^{\textrm{Alg}}_0$: 50 CSUs, $\alpha^{\text{Opt}}_0 = 4.3\cdot 10^6$, $\alpha^{\textrm{Alg}}_0 =  100$; $\mathcal{P}^{\textrm{Opt}}_1$: 15 CSUs, $\alpha^{\textrm{Opt}}_1 = 4.3\cdot 10^4$; $\mathcal{P}^{\textrm{Alg}}_1$: 11 CSUs, $\alpha^{\textrm{Alg}}_1 = 30$; $\mathcal{P}^{\textrm{Alg}}_2$: 9 CSUs, $\alpha^{\textrm{Alg}}_2 = 20$; $\mathcal{P}^{\textrm{Opt}}_2$: 6 CSUs, $\alpha^{\textrm{Opt}}_2 = 4.3\cdot 10^3$; $\mathcal{P}^{\textrm{Alg}}_3$: 5 CSUs, $\alpha^{\textrm{Alg}}_3 = 10$; $\mathcal{P}^{\textrm{Alg}}_4$: 4 CSUs, $\alpha^{\textrm{Alg}}_4 = 1$; $\mathcal{P}^{\textrm{Opt}}_3$, 6 CSUs, $\alpha^{\textrm{Opt}}_3 = 4.3\cdot 10^2$; $\mathcal{P}^{\textrm{CMPC}}\equiv\mathcal{P}^{\textrm{Opt}}_4\equiv\mathcal{P}^{\textrm{Alg}}_5$, 1 CSU, $\alpha^{\textrm{Opt}}_4 = 4.3\cdot 10^1$, $\alpha^{\textrm{Alg}}_5 = 10^{-13}$.
	}
    \label{fig:Random-50-topology}
\end{figure*}

\begin{table*}
	\centering
	\caption{ Comparison of DMPC-ADMM performance applied to a random network of hybrid systems for different partitioning strategies}
	\begin{tabular}{c | c |c| c |c | c | c | c }
		\hline 
		\textbf{Partition}  &   \textbf{Cores}  & \textbf{Cost fun.\ value\ } & \textbf{Opt.\ loss $\%$} & \textbf{Comp.\ time [$s$]}  & \textbf{Comp.\ time  ratio} &  \textbf{Core seconds [$s$]} & \textbf{Core seconds ratio}\\
		\hline 
		$\mathcal{P}^{\text{CMPC}}$& 1 & 6899.9750 &0.00 &2628.04 & 26.4870 & 2628.04 & 1.3736 \\ 
		$\mathcal{P}^{\text{ADMM}}$& 50 & 7749.2102 & 12.31 &99.22 & 1.0000 & 4960.99& 2.5930 \\
		$\mathcal{P}^{\text{Opt}}_1$& 15 & 6976.9122 & 1.12 &279.25 & 2.8145 & 4188.73 & 2.1893 \\
		$\mathcal{P}^{\text{Opt}}_2$&  6& 6916.7114 & 0.24 &436.29 &  4.3972 & 2617.71 & 1.3682 \\ 
		$\mathcal{P}^{\text{Opt}}_3$&  2& 6918.2608 & 0.27 &1368.36 &  13.7918 & 2736.72 & 1.4304 \\ 
		$\mathcal{P}^{\text{Alg}}_1$& 11 & 6982.5798 & 1.20 &173.93 & 1.7530 &1913.28 & 1.0000 \\
		$\mathcal{P}^{\text{Alg}}_2$& 9& 6975.5149 & 1.09 &353.81 &  3.5660 & 3184.27& 1.6643 \\ 
		$\mathcal{P}^{\text{Alg}}_3$&  5& 6911.0475 & 0.16 &2818.69 &  28.4085 & 14093.47 & 7.3661 \\ 
		$\mathcal{P}^{\text{Alg}}_4$&  4& 6923.4294 & 0.34 &1681.59 &  16.9481 & 6726.37 & 3.5156 \\ 
		\hline
	\end{tabular}
	\label{tab:results-random}
\end{table*}

\begin{figure}[t]
	\centering
	\subfloat[]{
		\includegraphics[width=.85\linewidth]{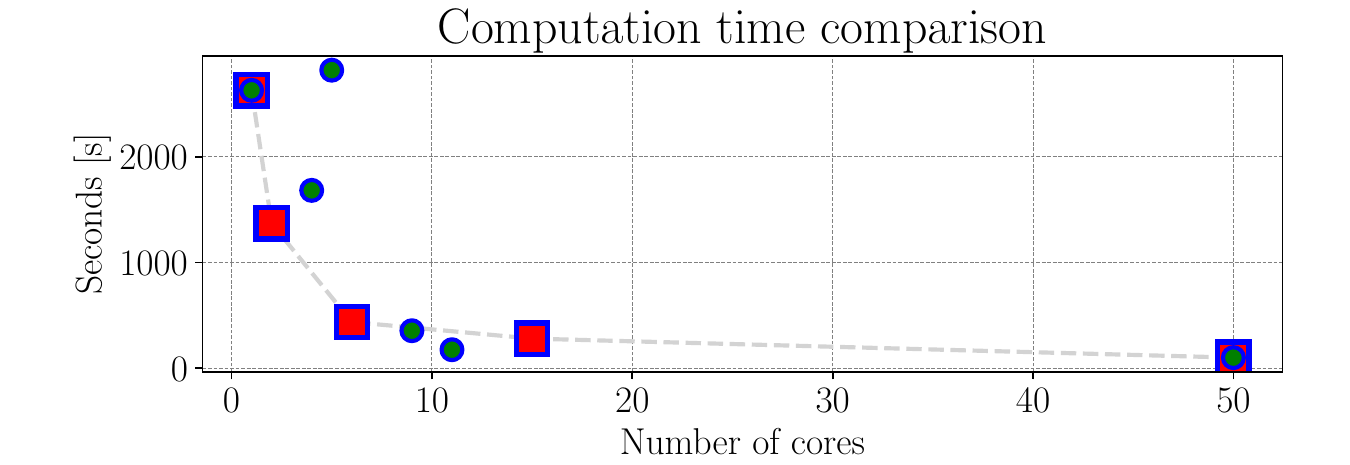}
		\label{fig:computationtimes-random}
	}
	\hfill
	\subfloat[]{
		\includegraphics[width=.85\linewidth]{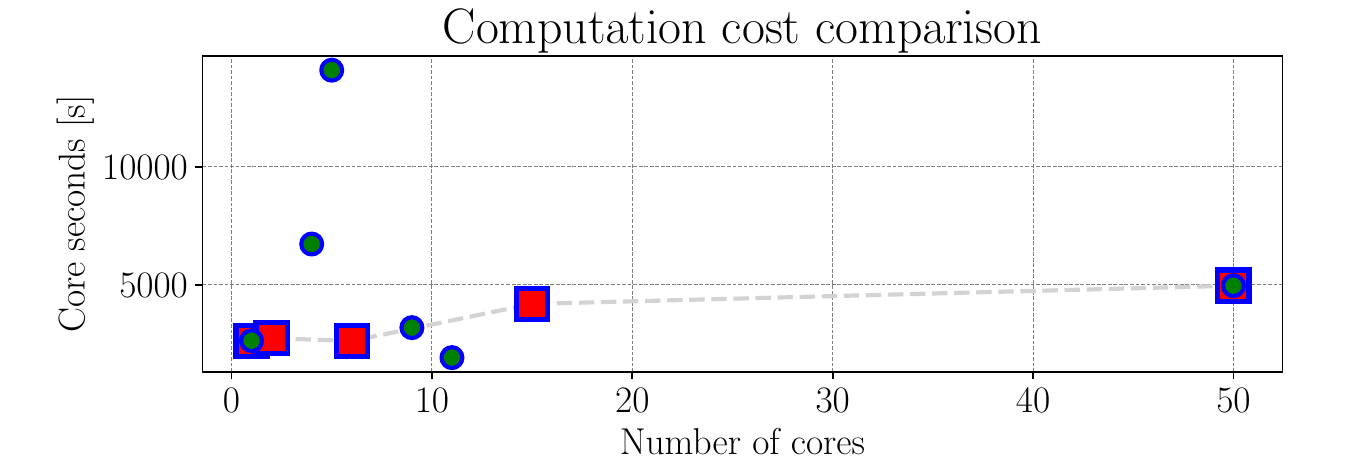}
		\label{fig:computationcost-random}
	}
	\caption{Computation time in [s] (Fig.\ \ref{fig:computationtimes-random}), and computation cost in core-seconds (Fig.\ \ref{fig:computationcost-random}) required to perform the simulation related to the number of required cores. The results for each strategy are reported by a marker, square-red for optimization-based partitioning, circle-green for algorithmic.}
\end{figure}

%


\section{Conclusions and Future Work}
This work establishes foundations for the systematic partitioning of networks of control systems. We have pursued this goal by formalizing the concept of equivalent graph of a dynamical system, extending analogous previous propositions; and introducing the novel concepts of fundamental system unit (FSU), composite system unit (CSU), aggregation operation, and control partition. This has allowed us to propose a generalized partitioning technique consisting of two steps: first we select FSUs; then we aggregate them to obtain a collection of CSUs, which is the control partition. The former step is achieved through a selection algorithm; while for the latter we propose both an optimization-based and an algorithmic approach. 
These new approaches have been validated through examples, and case studies on networks of linear and hybrid systems of different topologies. Empirical evidence shows that, through partitioning, we can achieve superior distributed control performance, computation efficiency, and lower costs. 

Future work could focus on improving the real-time applicability of these techniques and exploring the impact of the topology in distributed network control in terms of performance and stability.
Other areas for extensions are the inclusion in the equivalent graph of external signals or disturbances through additional nodes, and applying generalized partitioning to control strategies different from predictive control.




\bibliographystyle{IEEEtran}
\bibliography{IEEEabrv,bibliography.bib}
\section{Biography Section}
%
%
\begin{IEEEbiography}[{\includegraphics[width=1in,height=1.25in,clip,keepaspectratio]{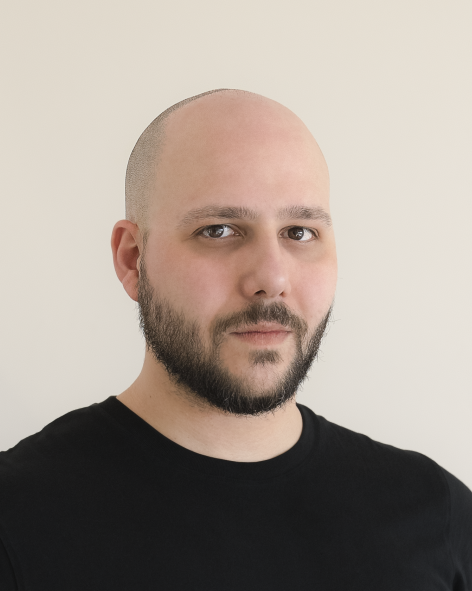}}]{Alessandro Riccardi} is a PhD candidate in the Delft Center for Systems and Control at the Delft University of Technology. He received a BS in automation engineering, and an MS in control engineering both from the University Sapienza of Rome. His research interests include predictive control and reinforcement learning for complex large-scale systems, and learning-based modeling of system dynamics.
\end{IEEEbiography}
\begin{IEEEbiography}[{\includegraphics[width=1in,height=1.25in,clip,keepaspectratio]{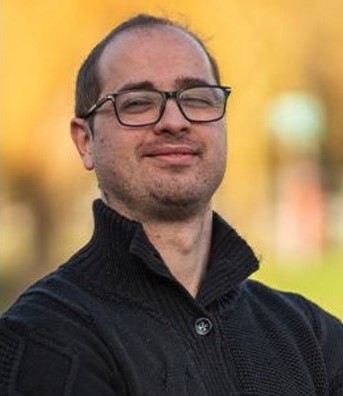}}]{Luca Laurenti}
	(Member, IEEE) is a tenure-track assistant professor at the Delft Center for Systems and Control at TU Delft and co-director of the  HERALD Delft AI Lab. He received his PhD from the Department of Computer Science, University of Oxford (UK). Luca has a background in stochastic systems, control theory, formal methods, and artificial intelligence. His research work focuses on developing data-driven systems provably robust to interactions with a dynamic and uncertain world.
\end{IEEEbiography}
\begin{IEEEbiography}[{\includegraphics[width=1in,height=1.25in,clip,keepaspectratio]{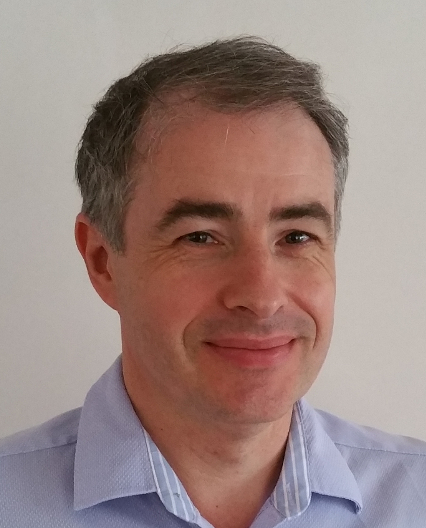}}]{Bart De Schutter}
	(IEEE member since 2008, senior member since 2010,
	fellow since 2019) is a full professor and head of
	department at the Delft Center for Systems and Control of Delft
	University of Technology in Delft, The Netherlands.
	
	Bart De Schutter is senior editor of the IEEE Transactions on
	Intelligent Transportation Systems.  His current research interests
	include integrated learning- and optimization-based control,
	with applications in transportation and energy networks.
\end{IEEEbiography}
\appendices
\section{Additional Profs}
Proof of \textbf{Prop.\ 1}.
\begin{proof}
The statement is verified by construction. Consider two CSUs, $\mathcal{S}_{1,k}=(\mathcal{V}_1,\mathcal{E}_{1,k},w_{1,k},\tilde{g}_1)$, $\mathcal{S}_{2,k}=(\mathcal{V}_2,\mathcal{E}_{2,k},w_{2,k},\tilde{g}_2)$, with $\mathcal{V}_1 = \mathcal{X}_1\cup\mathcal{U}_1$, $\mathcal{V}_2 = \mathcal{X}_2\cup\mathcal{U}_2$. By definition of CSU, for both $\mathcal{S}_{1,k}$ and $\mathcal{S}_{2,k}$, it holds that no edges are present from the set of nodes $\mathcal{U}_i$ to the set of states $\mathcal{X}_j$, for $i,j=1,2$, and $i\neq j$.
Now we define the CSU resulting from their aggregation as $\mathcal{S}_{(1,2),k}$ $=$ $\left.\left(\mathcal{S}_{1,k}\uplus\mathcal{S}_{2,k}\right)\right\lvert_{\mathcal{G}_k}$ $=$ $\left(\mathcal{V}_1\cup\mathcal{V}_2,\mathcal{E}_{1,k}\cup\mathcal{E}_{2,k}\cup\mathcal{E}_{(1,2),k},w_{(1,2),k},(\tilde{g}_1,\tilde{g}_2)\right)$, where $\mathcal{E}_{(1,2),k}$ contains the edges in $\mathcal{E}_k$ linking the two sets of nodes, but that are not present in the individual sets of edges. The operation does not add any edge connecting the set of input nodes $\mathcal{U}$ to the aggregated set of states $\mathcal{X}_1\cup\mathcal{X}_2$. Thus, the CSU resulting from the aggregation is a CSU according to Definition \ref{def:control-agent}.
\end{proof}
\section{Additional Figures}
\begin{figure}[h]
	\centering
	\includegraphics[width=0.9\linewidth]{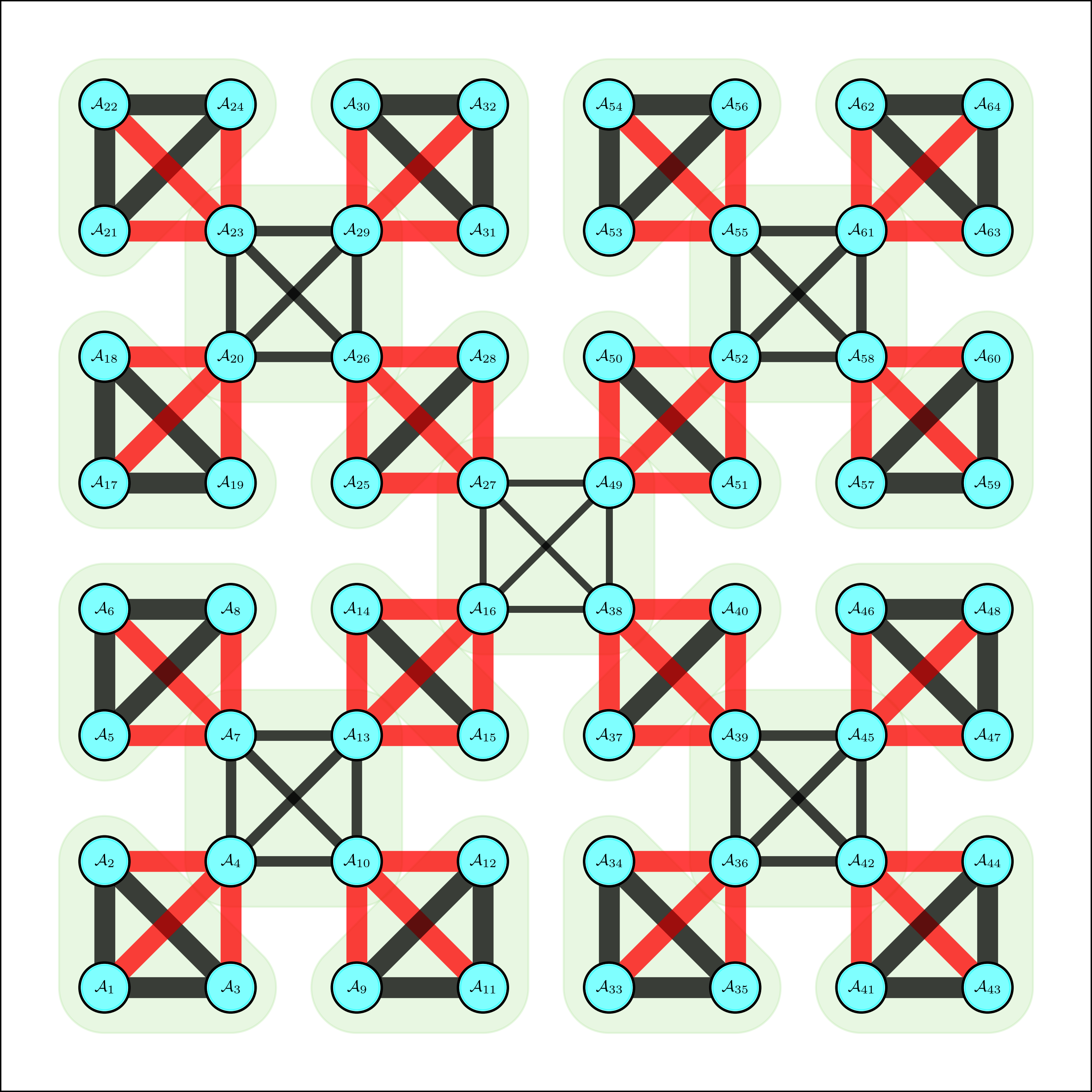}
	\caption[]{ A partition selected to maximize the interaction among CSUs used in Sec.\ \ref{subsec:ADMM-linear}}
	\label{fig:Modular_64_test1}
\end{figure}
%

\end{document}